\newtheorem{defi}{Definition}
\newtheorem{theorem}{Theorem}
\newtheorem{lemma}{Lemma}
\def\BibTeX{{\rm B\kern-.05em{\sc i\kern-.025em b}\kern-.08em
    T\kern-.1667em\lower.7ex\hbox{E}\kern-.125emX}}
\begin{document}

\title{Deadline-Aware Online Scheduling for LLM Fine-Tuning with Spot Market Predictions}

%

\author{
\IEEEauthorblockN{Linggao Kong\IEEEauthorrefmark{1}, Yuedong Xu\IEEEauthorrefmark{1}, Lei Jiao\IEEEauthorrefmark{2}, Chuan Xu\IEEEauthorrefmark{3}}
\IEEEauthorblockA{\IEEEauthorrefmark{1}Fudan University, China \quad
\IEEEauthorrefmark{2}University of Oregon, USA \quad
\IEEEauthorrefmark{3}Inria, France \\
}
\thanks{Yuedong Xu is with College of Computer Science and Artificial Intelligence, and Artificial Intelligence Innovation and Incubation Institute, Fudan University, Shanghai, China (e-mail: ydxu@fudan.edu.cn)}
}

\maketitle

\begin{abstract}

As foundation models grow in size, fine-tuning them becomes increasingly expensive. While GPU spot instances offer a low-cost alternative to on-demand resources, their volatile prices and availability make deadline-aware scheduling particularly challenging. We tackle this difficulty by using a mix of spot and on-demand instances. 
Distinctively, we show the predictability of prices and availability in a spot instance market, the power of prediction in enabling cost-efficient scheduling and its sensitivity to estimation errors. 
An integer programming problem is formulated to capture the use of mixed instances under both the price and availability dynamics. We propose an online allocation algorithm with prediction based on the committed horizon control approach that leverages a \emph{commitment level} to enforce the partial sequence of decisions. 
When this prediction becomes inaccurate, we further present a complementary online algorithm without predictions. 
An online policy selection algorithm is developed that learns the best policy from a pool constructed by varying the parameters of both algorithms. 
We prove that the prediction-based algorithm achieves tighter performance bounds as prediction error decreases, while the policy selection algorithm possesses a regret bound of $\mathcal{O}(\sqrt{T})$. 
Experimental results demonstrate that our online framework can adaptively select the best policy under varying spot market dynamics and prediction quality, 
consistently outperforming baselines and improving utility by up to 54.8\%.

\end{abstract}

\section{Introduction}

Pretrained large language models (LLMs) such as GPT~\cite{achiam2023gpt} and Llama~\cite{touvron2023llama} have deeply penetrated into our daily lives. 
Although these models demonstrate broad linguistic understanding, factual knowledge, and reasoning, they often lack the precision and task-specific performance required in real-world applications, making fine-tuning with domain-specific data essential to adapt general-purpose LLMs. A variety of fine-tuning methods have been developed, including full fine-tuning, adapter-based fine-tuning, prompt tuning, instruction tuning, and reinforcement learning from human feedback (RLHF)~\cite{howard2018universal,houlsby2019parameter,lester2021power,wei2021finetuned,christiano2017deep}. 
Among these, LoRA (Low-Rank Adaptation)~\cite{hu2022lora} is notable for its efficiency and scalability. It updates a small number of trainable parameters while keeping the base model frozen, significantly reducing memory and computation costs, which makes it particularly well-suited for rapid adaptation, cost-effective deployment, and efficient operation even in resource-constrained environments.

Finetuning LLMs with LoRA typically requires high-end cloud GPUs, incurring substantial costs. Recently, spot GPU instances have emerged as a cost-effective alternative, offering up to 91\% discounts~\cite{gcpSpotDoc}. Third-party brokers further aggregate GPUs from decentralized providers into federated clouds~\cite{vastai2025,runpod2025,awsLambda}, where the available GPUs are de facto spot instances. However, spot instances exhibit \emph{high price volatility} and \emph{intermittent availability} due to preemption and provider churn, introducing a trade-off between cost savings and meeting service-level objectives (SLOs).

CPU spot instance usage in cloud computing has been widely studied over the past decade. For instance, Menache et al.~\cite{menache2014demand} proposed an online algorithm to determine both the type (on-demand vs. spot) and quantity of instances to use at any given time. Song et al.~\cite{song2012optimal} developed a dynamic bidding strategy from the perspective of a cloud broker, aiming to maximize service profit by bidding for spot instances. In contrast, the use of GPU spot instances in deep learning systems has only recently gained attention. Yang et al.~\cite{mao2025skyserve} formulated an integer programming problem to assign CPU or GPU spot instances to multi-tenant jobs, which they solved using a polynomial-time algorithm based on linear programming rounding. Wu et al.~\cite{wu2024can} studied the strategy for switching between spot and on-demand GPU instances to minimize costs, considering the inherent variability of spot availability.

In this paper, we aim to minimize the cost of LoRA-based LLM fine-tuning with deadline awareness in a hybrid market consisting of both on-demand and spot GPU instances. Our problem is characterized by three key features. First, we explore the impact of dynamically adjusting the number of GPU instances on model convergence—an aspect that has not been previously validated. Second, unlike prior work that assumes the unpredictability of spot instance prices and availability, we leverage their potential predictability to design more effective scheduling strategies instead. Third, we significantly expand the state space by incorporating dynamic spot instance prices and GPU availability, in contrast to prior work that assumes fixed price and binary availability indicators for spot instances.

The uncertainty in prediction quality and the enlarged decision space from fluctuating spot prices and availability complicate resource allocation and scheduling. We first formulate scheduling as a mixed-integer optimization problem. To maximizng the objective under volatile spot market conditions, we propose a prediction-based allocation algorithm building on Committed Horizon Control algorithm and a fallback heuristic leveraging real-time job progress and market trends. Subsequently, we develop an online policy selection algorithm that adaptively identifies the best policy from a pool constructed by systematically varying the hyperparameters of these two algorithms. We rigorously prove that smaller prediction errors yield a tighter upper bound on the performance gap of our prediction-based allocation algorithm relative to the optimal policy. 
Furthermore, we show that the online policy selection achieves a regret bound of $\mathcal{O}(\sqrt{T})$. Extensive experiments demonstrate that our approach consistently outperforms baselines---On-Demand Only Policy (OD-Only), Maximal Spot Utilization Policy (MSU), and Uniform Progress Policy (UP)~\cite{wu2024can}---and adapts effectively across diverse prediction environments.

This work makes the following contributions:

\subsubsection{Deadline-Aware Fine-Tuning Problem}

We formulate a deadline-aware resource allocation problem for fine-tuning jobs on cloud platforms using mixed on-demand and spot instances to minimize cost.

\subsubsection{Hybrid Resource Allocation with Online Policy Selection}

We design a prediction-based allocation algorithm with performance guarantees that improve with prediction accuracy, along with a non-predictive variant for robustness. To handle dynamics, we propose an online policy selector that adaptively identifies the best strategy from a hyperparameterized pool, with $\mathcal{O}(\sqrt{T})$ regret.

\subsubsection{Experimental Evaluation}

We evaluate our approach under different prediction errors and dynamic conditions. Results show that our algorithms adapt effectively and consistently outperforms baselines. In typical
configurations, the best-selected policy by our adaptive algorithm improves utility by 49.0\%, 54.8\%, and 33.4\% over the three baselines.

\section{Motivation}
With the scaling of foundation models, their fine-tuning for downstream tasks becomes costly on cloud platforms. This section highlights the cost-time trade-off introduced by using cheaper yet unreliable spot GPUs, where spot prediction can greatly enhance scheduling decisions.

\subsection{Multi-Instance Parallel Fine-tuning}

LoRA is an important fine-tuning method for LLMs that decomposes the incremental weight matrix  \( \Delta W \) into two low-rank matrices $B$ and $A$ at each layer, i.e. $W = {W_0} + \Delta W \simeq {W_0} + B\cdot A$, where \( W_0 \) is the original LLM model weight. Therefore, the number of learnable parameters shirnks from from $d^2$ to $2d\cdot r$ with $d$ denoted as the dimension of matrix $W$ and $r$ $(r\ll d)$ denoted as the low dimensional rank. Even though, finetuning LLMs is still compute intensive, which usually demands multi-GPU data parallelism. We evaluate the impact of the number of NVIDIA A100 GPUs on the training throughput in Figure \ref{fig3:throughput}. Both ChatGLM3-6B and Llama2-7B are considered, and the batch size is unanimously set to 32. Here, $x$-coordinate denotes the number of GPUs used in a machine and $y$-coordinate shows the fine-tuning throughput in samples per second. One can observe that the training throughput increases almost linearly with the number of GPUs, despite the models used. 

LoRA fine-tuning with spot instances is feasible on both cloud platforms and federated edge clouds. For example, consider LLaMA2-7B, which has a model dimension of 4096, 32 layers, and a LoRA rank of 16. When the low-rank matrices are updated in each iteration, the total communication overhead amounts to approximately 16.8MB per iteration for half-precision gradients. If two spot instance GPUs are hosted on different machines but connected via a 200Gbps RDMA link, the communication time per GPU is negligible compared to the computation time (e.g., 10 seconds). However, if they are connected over standard Internet links with a bandwidth of 100Mbps, the communication time per iteration is approximately 1.35 seconds. This overhead can be effectively amortized by performing multiple rounds of local training before communications.

Dynamically adjusting the degree of data parallelism can introduce nontrivial overhead, particularly in spot instance scenarios. The reconfiguration time includes the transmission of the training checkpoint, the startup time of the Docker container, and the initialization of the fine-tuning process. 
A typical checkpoint (model + LoRA parameters + optimizer state) takes 0.58s over 200Gbps RDMA, but up to 1152s over 100Mbps links. We define this as the \emph{switching cost} during spot preemption, which depends on network speed but not GPU count, since only new instances require reconfiguration.

\begin{figure}[t]
    \centering
    \begin{minipage}[t]{0.49\linewidth}
        \centering
        \includegraphics[width=\linewidth]{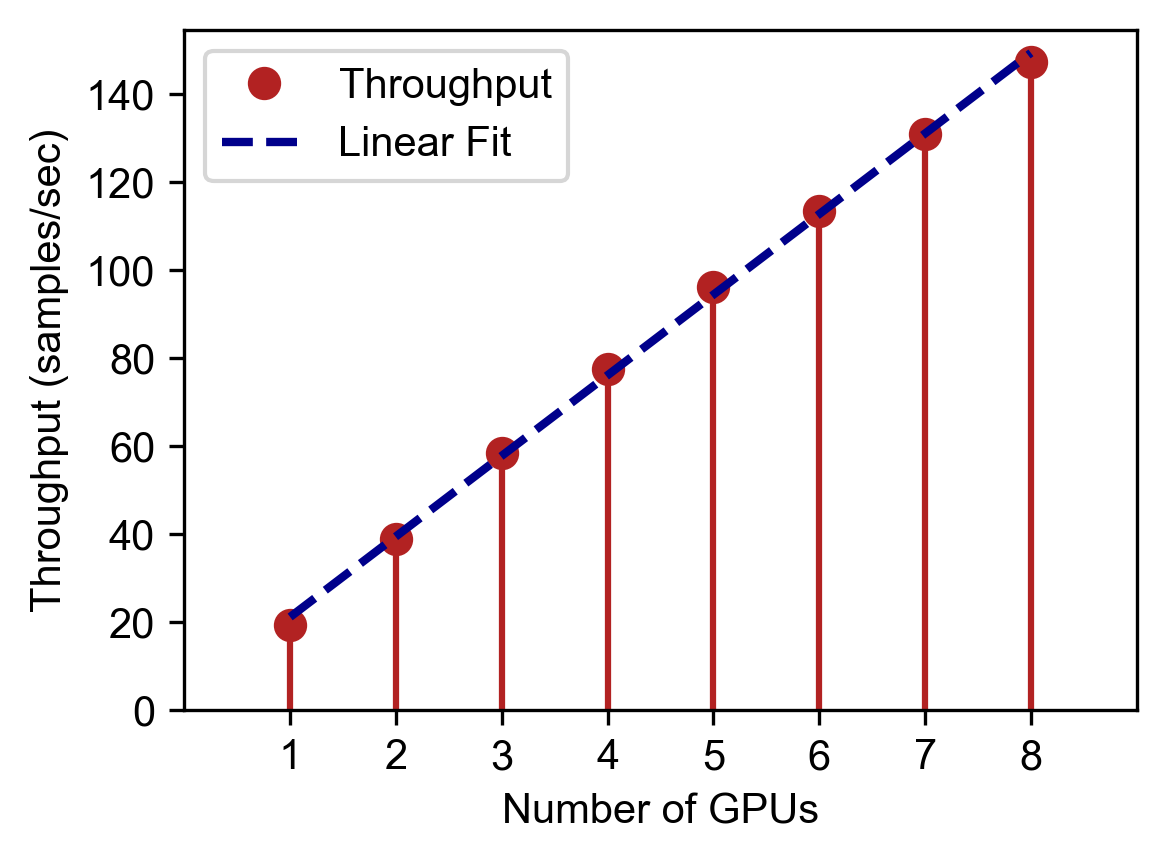}
        \subcaption{ChatGLM3-6B}
        \label{fig1a:ChatGLM3-6B}
    \end{minipage}%
    \hspace{-0.3em}  
    \begin{minipage}[t]{0.49\linewidth}
        \centering
        \includegraphics[width=\linewidth]{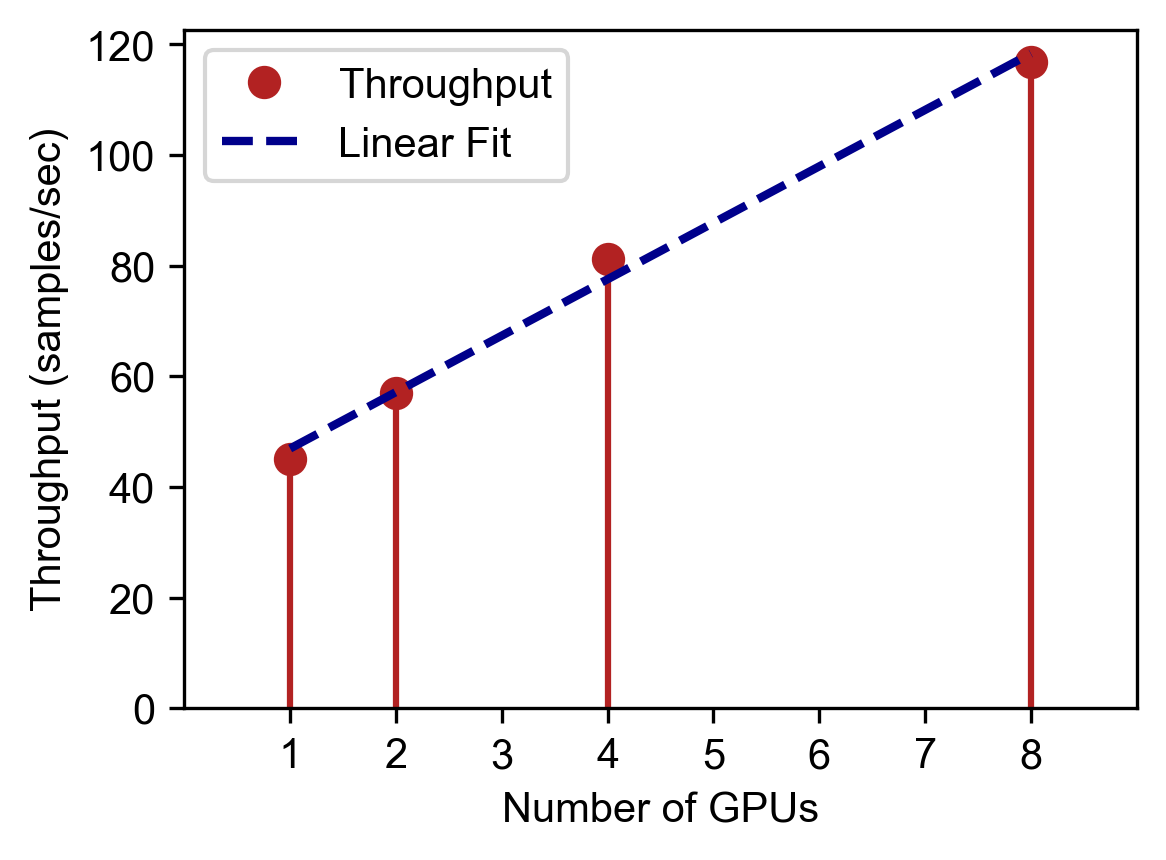}
        \subcaption{Llama2-7B}
        \label{fig1b:Llama2-7B}
    \end{minipage}
    \vspace{-0.5em}
    \caption{Training throughput vs number of GPUs (A100 GPU).}
    \label{fig3:throughput}
\end{figure}

\subsection{Volatility of Spot Instance Price and Availability}

Service providers offer two pricing models for GPU resources: on-demand instances and spot instances. On-demand instances are available whenever needed and provide guaranteed reliability, following a pay-as-you-go model. In contrast, spot instances are significantly more cost-effective but come with the trade-off of intermittent availability. Chasing the lowest spot instance prices may reduce costs but risks violating the SLO, while relying solely on on-demand instances results in high fine-tuning expenses.

Figure \ref{fig1:fluctuations} illustrates the availability and price fluctuations of NVIDIA A100 GPUs over a 10-day period on Vast.ai~\cite{vastai2025}, a cloud computing platform. We collect data at 30-minute intervals, recording the number of available and rented GPU instances, along with both the free and rented prices. As shown in Figure~\ref{fig1:fluctuations}(b), the median instance price is only about 60\% of the P90 price. 
Hence, using spot instances wisely will lead to a significant cost saving in fine-tuning. In Figure~\ref{fig1:fluctuations}(a), the number of available GPUs on the platform fluctuates over time. Since Vast.ai aggregates all the GPUs across its clients, the number of GPUs available within a specific geographic region, especially those suitable for LLM fine-tuning, is often limited, making spot instance availability unreliable. Unlike the existing study~\cite{wu2024can} that consider only the dynamics of spot GPU instance availability, our work accounts for both price and availability fluctuations. This motivates us to develop an online learning algorithm for purchasing on-demand or spot instances so as to minimize the cost with SLO guarantee.

\begin{figure}[t]
    \centering
    \begin{minipage}[t]{0.49\linewidth}
        \centering
        \includegraphics[width=\linewidth]{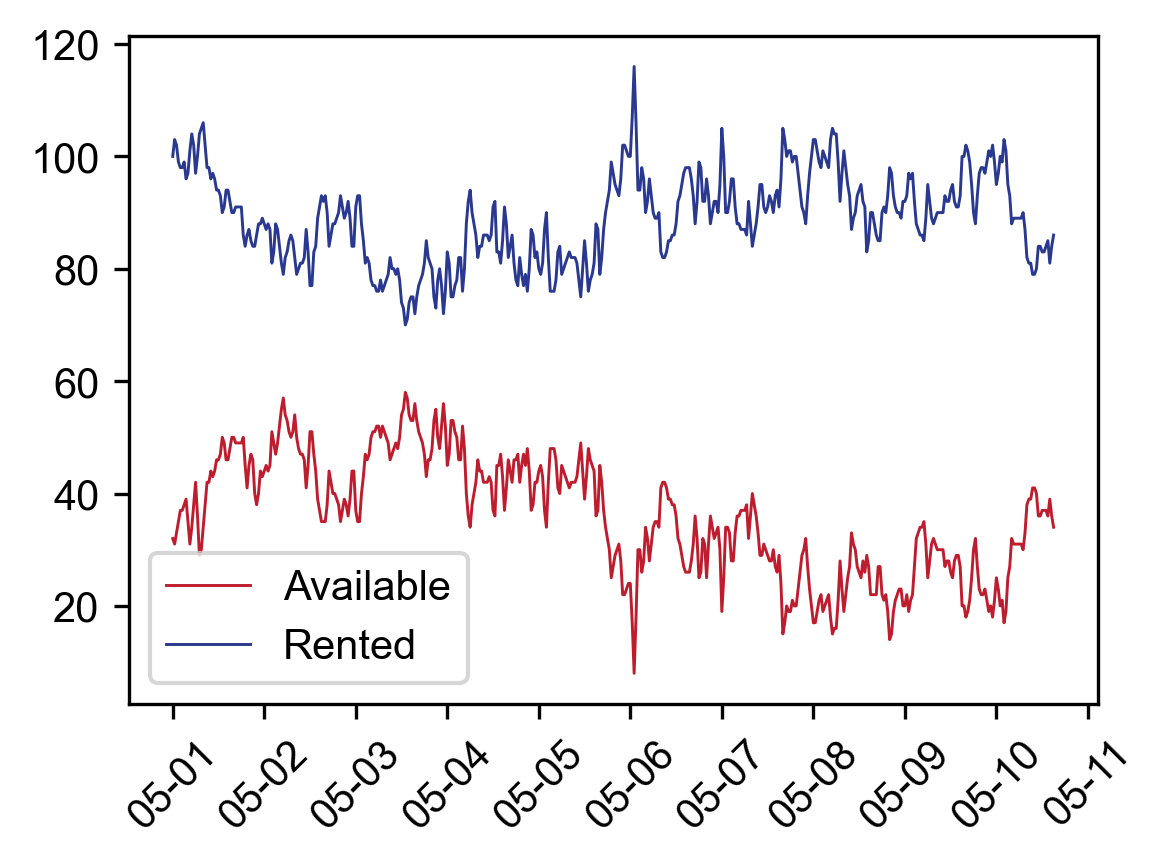}
        \subcaption{Availability}
        \label{fig2a:avail fluctuations}
    \end{minipage}%
    \hspace{-0.3em}  
    \begin{minipage}[t]{0.49\linewidth}
        \centering
        \includegraphics[width=\linewidth]{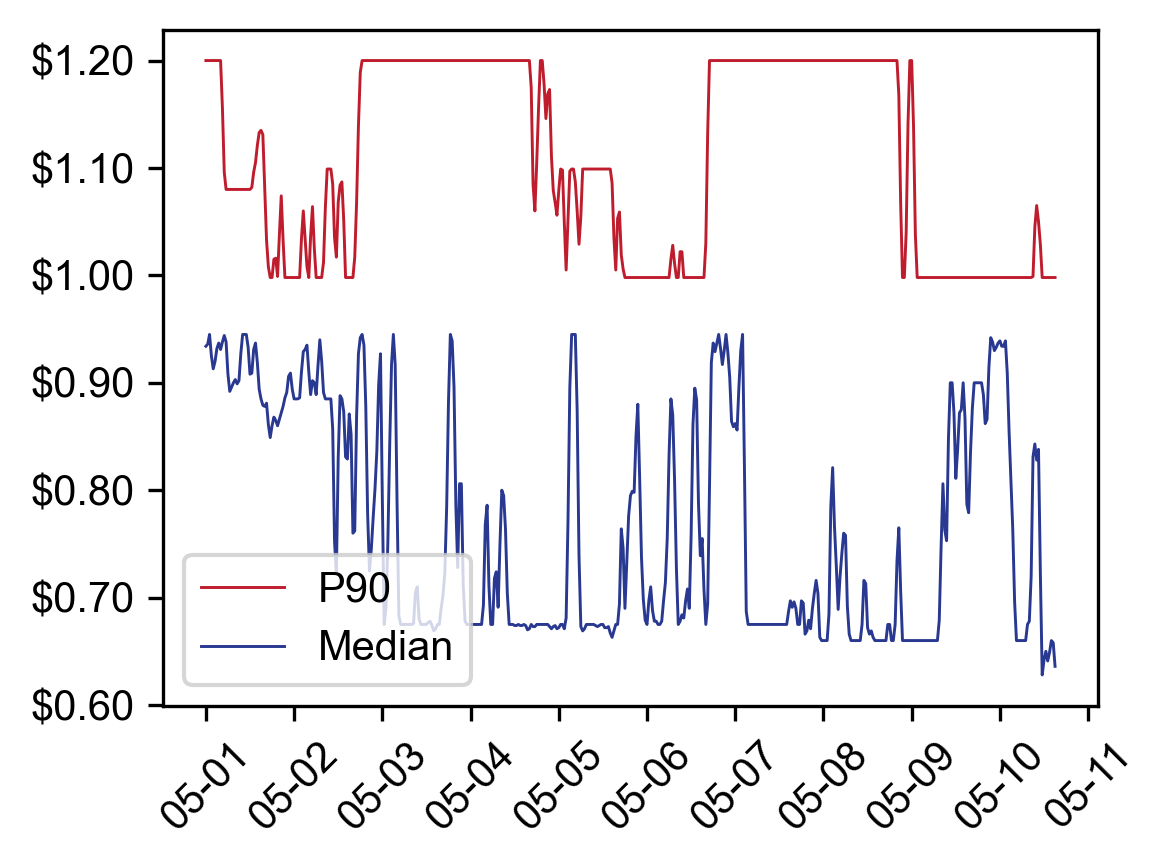}
        \subcaption{Price}
        \label{fig2b:price fluctuations}
    \end{minipage}
    \vspace{-0.5em}
    \caption{Fluctuations in A100 Spot Instances on Vast.ai over 10 Days.}
    \label{fig1:fluctuations}
\end{figure}

\subsection{Understanding the Power of Prediction}


The price and availability dynamics are not completely random, but is predictable to a certain extent. Under this circumstance, an online algorithm is designed to minimize regret, defined as the difference between its performance and the best possible fixed strategy in hindsight. By anticipating the future inputs, online learning with prediction is likely to make more informed choices, thus reducing the expected cumulative regret. Unfortunately, the power of prediction has been overlooked in online resource allocation for networking applications. 

Preliminary analysis of Figure \ref{fig1:fluctuations} indicates that spot instance availability, such as that of A100 GPUs, tends to follow a daily trend, with higher availability during the daytime than at night. We employ an Auto-Regressive Integrated Moving Average (ARIMA) model~\cite{box2015time} to learn the temporal dynamics of instance availability and pricing, using a 30-minute time window for prediction. As illustrated in Figure \ref{fig2:Forecasting}, our predictions closely match the actual fluctuations. When predictions are accurate, this substantially reduces uncertainty and enables more aggressive allocation of low-cost spot resources for training jobs, thereby reducing overall costs. 

\begin{figure}[t]
    \centering
    \begin{minipage}[t]{0.49\linewidth}
        \centering
        \includegraphics[width=\linewidth]{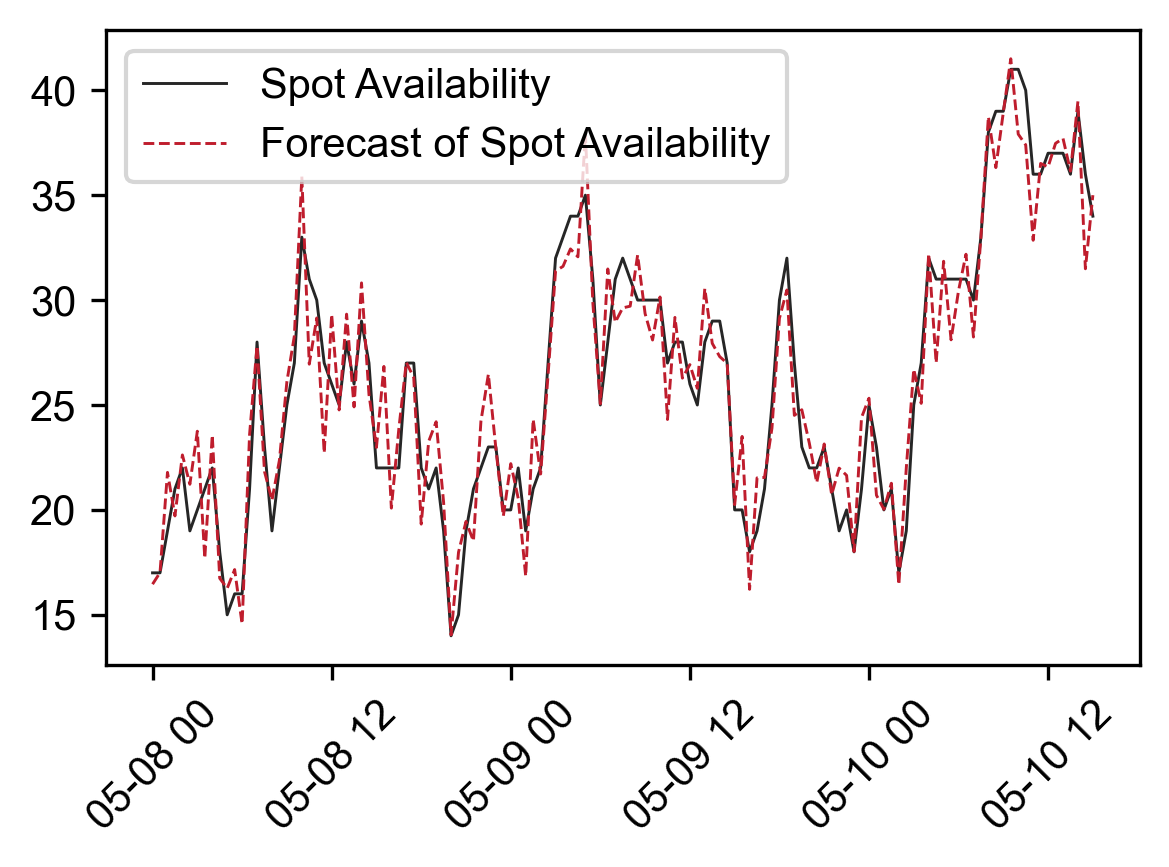}
        \subcaption{Availability}
        \label{fig2a:ARIMA_avail_1}
    \end{minipage}%
    \hspace{-0.3em}  
    \begin{minipage}[t]{0.49\linewidth}
        \centering
        \includegraphics[width=\linewidth]{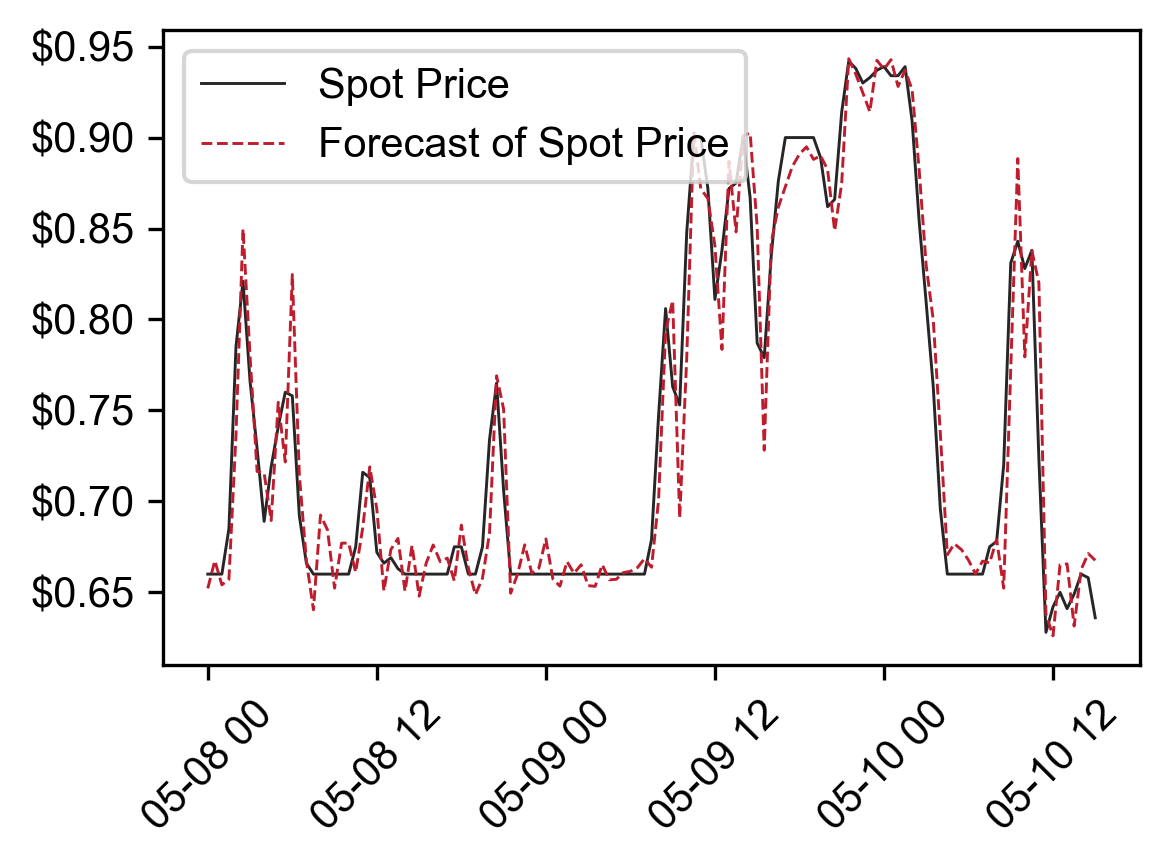}
        \subcaption{Price}
        \label{fig2b:ARIMA_price_1}
    \end{minipage}
    \vspace{-0.5em}
    \caption{Forecasting Spot Availability and Price Using ARIMA.}
    \label{fig2:Forecasting}
\end{figure}

However, due to platform-specific pricing and the unpredictability of real-world events, predictions may be unreliable. Different machine types and regions potentially exhibit varying fluctuation patterns, leading to low accuracy or unpredictable behavior. In such cases, relying on forecasts could increase costs and risks, making the consequences of inaccurate predictions critical to consider.

\begin{figure}[t] 
\centering
\includegraphics[width=\columnwidth]{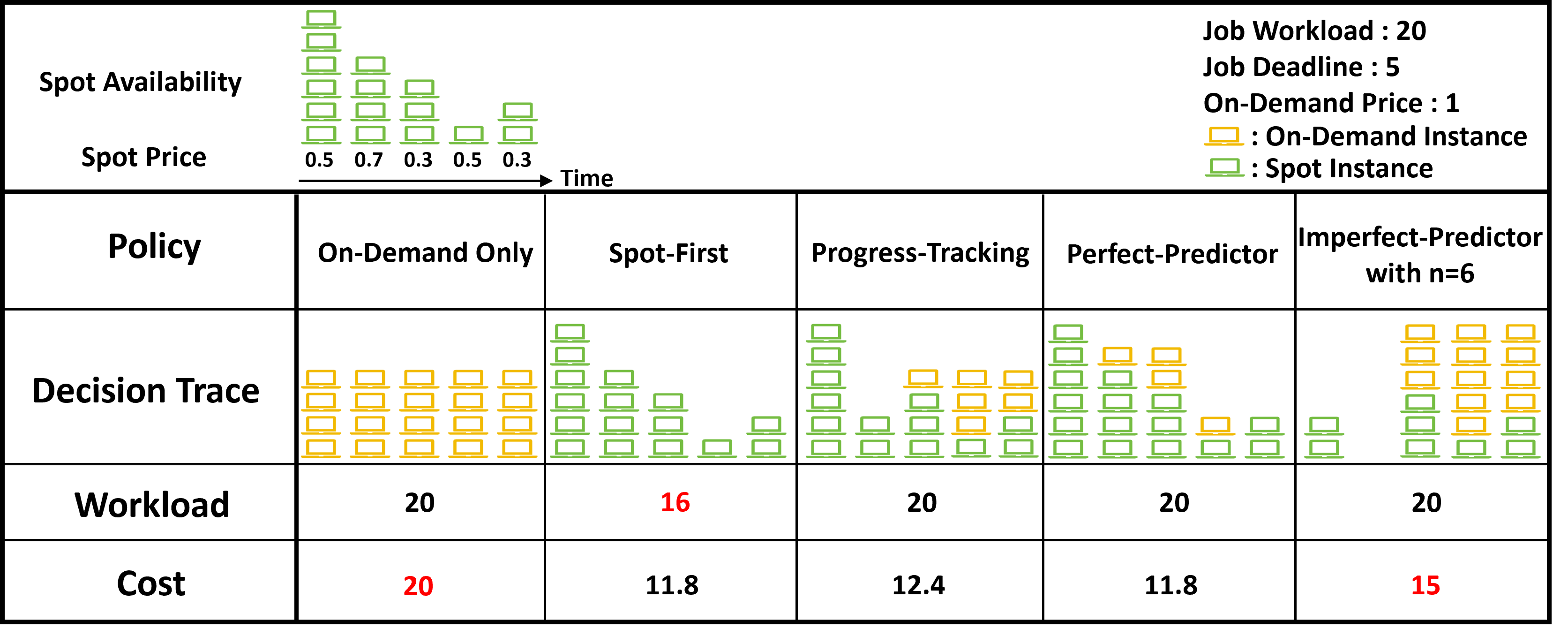} 
\caption{Comparison of Workload and Cost under Different Resource Allocation Strategies.}
\label{fig4:toyexample}
\end{figure}

Figure~\ref{fig4:toyexample} presents a comparative analysis of resource allocation strategies with and without prediction. We simulate a fine-tuning job that must complete 20 units of workload within 5 time slots, ignoring reconfiguration overhead for simplicity. Baseline strategies without prediction include: (1) using only on-demand instances, (2) prioritizing spot instances, and (3) a hybrid progress-tracking strategy that adjusts allocations based on partial progress. Prediction-based strategies include (i) perfect foresight and (ii) a constant forecast of 6 available spot instances, both of which incorporate on-demand instances as fallback resources to ensure deadline compliance. Results show that using only on-demand instances guarantees deadline compliance but incurs high costs; prioritizing spot instances reduces costs but may violate deadlines; the progress-tracking strategy ensures timely progress but fails to exploit cheaper spot resources effectively. In contrast, when predictions are accurate, the system can utilize low-cost spot instances while maintaining progress guarantees, resulting in the lowest overall cost. However, when predictions are inaccurate, the total cost can even exceed that of non-predictive strategies.

In summary, while accurate forecasting can substantially reduce costs by enabling better utilization of spot instances, it is crucial to develop mechanisms that can gracefully handle cases of poor or unpredictable forecast performance.

\section{System Model}

In this section, we model the allocation of hybrid instances to a fine-tuning job with a deadline constraint. Our objective is to achieve an optimal balance between SLO and cost.

\subsection{Modeling of Fine-tuning Job Processing}

We consider the problem of fine-tuning large foundation models using LoRA~\cite{hu2022lora}, where fine-tuning jobs arrive sequentially in a stochastic manner. For modeling clarity, we focus on a single job that arrives at the system, while our framework can be readily extended to handle multiple jobs. A fine-tuning job is characterized by a four-tuple \( \{ L, d, N^{\text{min}}, N^{\text{max}} \} \), where \( L \) represents the total computation workload required for completion, \( d \) is the preset completion deadline, and \( N^{\text{min}} \) and \( N^{\text{max}} \) represent the minimum and maximum number of GPUs required to process the job. Here, $L$ is computed by ${L} = {D} \times {n^{epoch}}$, where $D$ is the number of data samples in fine-tuning and $n^{epoch}$ is the number of epochs. 

The degree of GPU parallelism for the job is limited. We define $N^{min}$ as the minimum number of GPUs required to store all necessary components in their HBM for fine-tuning, including the base model, the LoRA adapter parameters, any associated data, and, if applicable, the checkpoint and optimizer states. The maximum parallelism $N^{max}$ is the highest number of GPUs that can be used without significantly diminishing parallel efficiency due to under-utilization or overhead.

\subsection{Modeling of Computing Time}

We consider using multiple instances for parallel computing to improve computational efficiency, and a mixed use of on-demand and spot instances to reduce costs. Without loss of generality, we evenly divide time into discrete time slots. 
At each time slot, the available quantity of spot instances is denoted by \( n_t^{avail} \), and their corresponding price is \( p_t^s \), which remain constant within each time slot and only change across discrete time periods. On-demand instances are considered continuously available with a fixed price \( p^o \). Any spot and on-demand instances are considered homogeneous in terms of GPU computing capability. Based on the real-world experimental results illustrated in Figure~\ref{fig3:throughput}, we represent the relationship between throughput and the number of instances as
\begin{equation}
    H(n) = \left \{
    \begin{aligned}
         &\alpha \cdot n + \beta \quad (\beta \neq 0), && n\in \mathbb{Z}^+, \\
         &0, && n=0,
    \end{aligned}
    \right.
    \label{throughput estimation}
\end{equation}
where $n$ is the number of computing instances, and $\alpha$ stands
for the improved efficiency with the increased scale of parallelism. To avoid affecting the model's convergence due to changes in the number of instances, we fix the global batch size. While we adopt a linear formulation for clarity, the framework remains applicable to nonlinear throughput models. 

Additionally, when the total number of instances changes, other overheads are introduced, such as launching new instances, data transfer, and synchronization operations between multiple instances. We denote the proportion of effective computation time within the time slot as:
\begin{equation}
    {\mu_t} = \left\{\begin{array}{l}
{\mu_1},{n_t} > {n_{t - 1}},\\
{\mu_2},{n_t} < {n_{t - 1}},\\
1,{n_t} = {n_{t - 1}},
\end{array} \right.
\end{equation}
where \( n_t \) is the total number of instances at time slot \( t \). 
We have ${\mu _1} \le {\mu _2} \le 1$, where $\mu_1$ accounts for the additional overhead introduced by both instance initialization and system reconfiguration, and $\mu_2$ reflects only the reconfiguration cost.

\subsection{Modeling of Cost and Revenue}

At each time slot \( t \), we allocate $n_{t}^o$ on-demand instances and $n_{t}^s$ spot instances for the job, until the entire job is completed at time \( T \). Therefore, the total cost of completing the job is 
\begin{equation}
C\left( {n_{t}^o,\left. {n_{t}^s} \right)} \right. = \sum\limits_{t = 1}^{{T}} {\left( {n_{t}^o \cdot {p^o} + n_{t}^s \cdot p_t^s} \right)}.
\end{equation}

We denote the revenue as a function of the deadline and the completion time~\cite{cheng2018deadline}.  Since the time requirement for fine-tuning jobs is elastic, we denote \( d \) as a soft deadline, and we also use another hard deadline $\gamma \times d$ ($\gamma > 1$) to appropriately tolerate delays in job completion. We denote the value function for completing the job as
\begin{equation}
    V\left( {{T}} \right) = \left \{
    \begin{aligned}
         &{v},&&{T} \le {d},\\
         &{v} \cdot \left( {1 - \frac{{{T} - {d}}}{{\left( {{\gamma } - 1} \right) \cdot {d}}}} \right),&&{d} < {T} < {\gamma }{d},\\
         &0,&&{T} \ge {\gamma }{d}.
    \end{aligned}
    \right.
    \label{value function t}
\end{equation}
In our framework, we do not impose constraints on the specific form of the value function. As long as the value function approximately follows the aforementioned pattern, it remains effective in yielding desirable results.

\subsection{Problem Formulation}

We formulate our GPU resource allocation problem for a fine-tuning job as follows:
\begin{equation}
\hspace{-5em} 
\max \quad V\left( T \right) - C\left( n_{t}^o, n_{t}^s \right)
\tag{5}
\label{objective}
\end{equation}

\vspace{-1.9em} 

\begin{subequations}
\begin{align}
\text{s.t.} \quad 
& \sum_{t=1}^{T} \mu_t \cdot H\left( n_{t}^o + n_{t}^s \right) \ge L, \tag{5a} \\
& n_{t}^s \le n_t^{\mathrm{avail}}, \quad \forall t, \tag{5b} \\
& n_{t}^o + n_{t}^s \le \delta_t \cdot N^{\max}, \quad \forall t, \tag{5c} \\
& n_{t}^o + n_{t}^s \ge \delta_t \cdot N^{\min}, \quad \forall t, \tag{5d} \\
& \delta_t \in \{0,1\}, \quad n_{t}^o, n_{t}^s \in \mathbb{N}, \quad \forall t. \tag{5e}
\end{align}
\end{subequations}

The objective is to maximize the profit of executing the job, calculated as the completion-time revenue minus the accumulated cost.
Constraint (5a) ensures that the job is completed at time \( T \). Constraint (5b) ensures that at each time slot \( t \), the number of spot instances does not exceed the spot availability. Constraints (5c) and (5d) define the range of the total number of instances at each time slot \( t \). When the job's state is pending, i.e., \( \delta_t = 0 \), both the number of on-demand and spot instances are $0$. When the job's state is executing, i.e., \( \delta_t = 1 \), the total number of on-demand and spot instances should be within the range $[N^{min},N^{max}]$. Constraint (5e) specifies the domain of the variables.

\subsection{Problem Reformulation}
Our goal is to design a prediction-based online algorithm to optimize the objective. 
However, the original problem cannot be directly addressed using standard prediction-based methods. These algorithms operate within a fixed prediction window in each iteration, which is misaligned with the objective (\ref{objective}) that depends on the uncertain and potentially unbounded job completion time $T$, which is influenced by future decisions. To address this, we first refine the problem formulation through several key transformations.

\subsubsection{Workload Slicing for Deadline Achievement}At each time slot, we only have access to the current and historical spot instance prices and availability, while future data remains unknown. To ensure the job is completed before the deadline, 
we predefine a reference progress trajectory by slicing the total workload $L$ over the deadline $d$.
If the corresponding portion of the workload can be completed in each time slot, the entire job will be finished by the deadline. Here, we use uniform slicing, where the expected progress of the job at each time $t$ is
\begin{equation}
Z_{t}^{\exp } = \frac{{{L}}}{{{d}}} \cdot t.
\label{expected progress}
\end{equation}

\subsubsection{Transformation of the Value Function}
\label{subsubsec:reformulation object}
The objective function consists of two components with different decision variables: the value function depends on the completion time \( T \), while the cost function depends on \( n_t^o \) and \( n_t^s \). However, since \( T \) can only be determined after the job is completed, it introduces uncertainty into the optimization process. To address this, we express \( T \) as a function of the cumulative allocated resources across time slots, allowing the entire objective to be written in terms of \( n_t^o \) and \( n_t^s \) only.

Given that the job's revenue rapidly diminishes to zero when the completion time exceeds the deadline, as described in (\ref{value function t}), we introduce a termination configuration that simply choose on-demand GPU instances with the maximum parallelism to complete the workload immediately, and different termination schemes can be adopted without changing the modeling framework. Consequently, we only need to track the completed workload by the deadline to deduce the total completion time $T$.
We denote the workload completed before the deadline as
\begin{equation}
{Z^{\text{ddl}}} = \sum\limits_{t = 1}^{{d}} {{\mu_t} \cdot {H\left( {n_{t}^o + n_{t}^s} \right) } },
\end{equation}
and the monetary cost before deadline as
\begin{equation}
C^\text{ddl}\left( {n_{t}^o,\left. {n_{t}^s} \right)} \right. = \sum\limits_{t = 1}^{{d}} {\left( {n_{t}^o \cdot {p^o} + n_{t}^s \cdot p_t^s} \right)}.
\end{equation} 
By incorporating the relationship between $T$ and $Z^{ddl}$ into the original value function (\ref{value function t}), we obtain a new function $\widetilde{V}\left( {{Z^\text{ddl}}} \right)$ which absorbs the portion of the cost corresponding to the time beyond the deadline $d$, with $Z^{ddl}$ as the independent variable. Consequently, the new optimization objective is formulated as
\begin{equation}
\max \widetilde{V}\left( {{Z^\text{ddl}}} \right) - {C^\text{ddl}}\left( {n_{t}^o,n_{t}^s} \right).
\label{new utility}
\end{equation}

\section{Online GPU Provisioning Algorithms}
\label{section 4}

In this section, we propose two online GPU provisioning algorithms, addressing both predictive and non-predictive scenarios.

\subsection{Online Algorithm for Predictive Scenarios}

We adopt the Committed Horizon Control (CHC) algorithm~\cite{chen2016using} as our foundation and tailor it to the specific characteristics of our problem, resulting in our algorithm: Adaptive Hybrid Allocation with Prediction (AHAP). We select CHC from existing prediction-based online optimization methods for its effective balance between decision stability and adaptability under imperfect forecasts. CHC optimizes over a finite prediction horizon but commits only to the initial subset of decisions, thereby reducing the risk of excessive reliance on uncertain predictions. Unlike Receding Horizon Control (RHC)~\cite{mayne2000constrained}, which is sensitive to prediction errors, and Averaging Fixed Horizon Control (AFHC)~\cite{lin2012online}, which suffers from error accumulation, CHC provides a tunable trade-off between responsiveness and robustness. Hence, CHC is especially well-suited for our dynamic environment characterized by noisy and volatile predictions.

CHC involves two key hyperparameters: the \textit{prediction window} \( \omega \) and the \textit{commitment level} \( v \).
The prediction window \( \omega \) determines the forecasting horizon at each time slot, providing a predicted sequence of length \( \omega \). This allows the algorithm to generate a sequence of decisions by maximizing the cumulative objective over the prediction window. The commitment level \( v \) determines the portion of the decision sequence that is actually implemented. Specifically, in each time slot, only the first \( v \) decisions in the sequence are executed, meaning that the actual decision depends on the current and previous \( v{-}1 \) decision sequences.

To further adapt CHC to our setting, we introduce an additional hyperparameter \( \sigma \), representing the \textit{spot price threshold}. In our design, as long as the job has not reached its deadline, all spot instances priced below \( \sigma \) are utilized. This approach allows us to aggressively leverage low-cost spot instances to minimize overall resource costs. The introduction of $\sigma$ constitutes a key theoretical contribution, as it introduces an additional source of prediction error in our model, which directly affects the performance bound. This distinction allows our algorithm to better handle the volatility of spot markets while remaining cost-efficient.

\begin{algorithm}[t]
    \caption{Adaptive Hybrid Allocation for Predictive Scenarios (AHAP)}
    \label{algorithm:1}
    \KwIn{\{$L$,$d$,$N^{min}$,$N^{max}$\},\{$p^o$, $p_t^s$, $n_{t}^{avail}$ \},$\omega, v, \sigma$.}
    \KwOut{$u$, $n_{t}^{o}$, $n_{t}^{s}$, $t \in \left[ {1,{d}} \right]$ .}
    Initialize $Z_{0}=0,n_{0}=0$\;
    \For {$t=1,2,\dots, d$} {
        Predict spot price $p_{ \cdot \left| t \right.}^s$and availability $n_{\cdot \left| t \right.}^{avail}$\;
        Calculate expected progress $Z_{t+\omega}^{\exp}$ at current time slot according to (\ref{expected progress})\;
        
        \If{$Z_{t-1} \ge Z_{t+\omega}^{\exp}$} {
          \For{$\tau  = 0,1, \ldots \omega$ }
          {
          \If{$p_{t + \tau \left| \tau  \right.}^s \le \sigma  \cdot {p^o}$and $n_{t + \tau \left| t \right.}^{avail} \ge N^{\min }$}
             {$n_{t + \tau }^{s,t} \leftarrow \min \left\{ {n_{t + \tau \left| t \right.}^{avail},N^{\max }} \right\}$\;}
          \Else{
             $n_{t + \tau }^{s,t} \leftarrow 0$\;
          }
          $n_{t + \tau }^{o,t} \leftarrow 0$\;
          }
        }
        \Else{
           Solve the problem (\ref{partial object}) and get solutions $\{n_{\tau}^{o,t}\}$, $\{n_{\tau}^{s,t}\}$,$\tau  \in \left[ {t,t + \omega } \right]$\;
         }
           $n_{t}^o \leftarrow \sum\limits_{k = 0}^{v - 1} {n_{t}^{o,t - k}} $\;
           $n_{t}^s \leftarrow \min \{\sum\limits_{k = 0}^{v - 1} {n_{t}^{s,t - k}},n_t^{avail}\} $\;

           Limit $n_{t}^o+n_{t}^s$ in range of $\left[ {N^{\min },N^{\max }} \right]$\;
        
        Process the job and update the job progress $Z_{t}\leftarrow Z_{t-1} + \eta_t \cdot H(n_{t}^{s} + n_{t}^{o})$\; 

        \If{$Z_{t} \ge L$} {
          \textbf{break} from line 2\;
        }
    }
    Calculate utility $u$ according to (\ref{new utility})\;   
    \textbf{return} allocation decisions and job utility $u$\;
\end{algorithm}

The AHAP algorithm, presented in Algorithm \ref{algorithm:1}, operates as follows. At each time slot $t$, we compute the future $\omega$ time slots' forecasts and expected job progress (Lines 3–4). The current job progress then is compared against the expected progress. If the current progress exceeds the expected progress (Lines 5–11), job allocation within the prediction window prioritizes spot instances with prices below the threshold $\sigma$. If the current progress lags behind the expected progress (Lines 12–13), the CHC framework is applied to compensate for the shortfall within the prediction window by solving the following problem to obtain the optimal allocation sequence $\{n_{t}^{o,t}, \ldots , n_{t+\omega }^{o,t}\}$ and $\{n_{t}^{s,t}, \ldots , n_{t+\omega }^{s,t}\}$:
\begin{equation}
    \mathop {\max }\limits_{\substack{\{n_{\tau}^{o,t}\},
    \{n_{\tau}^{s,t}\} \\
    \tau  \in \left[ {t,t + \omega } \right]}}
    \widetilde{V}\left( Z_{t+\omega }^t \right)
    - \sum\limits_{\tau = t}^{t+\omega} \left( n_{\tau }^{o,t} \cdot p^o + n_{\tau }^{s,t} \cdot p_{\tau \vert t}^s \right).
    \label{partial object}
\end{equation}
We store the optimal allocation sequences at each time slot. For instance, at time slot $t-1$, the optimal allocation sequence for the $\omega+1$ time slots is recorded as $\{n_{t-1}^{o,t-1}, \ldots , n_{t+\omega-1 }^{o,t-1}\}$ and $\{n_{t-1}^{s,t-1}, \ldots , n_{t+\omega-1 }^{s,t-1}\}$.
The final allocation decision at time slot $t$ is determined by averaging the allocations over the past $v$ time slots (Lines 14–16). The job is then executed based on the allocation results (Line 17). AHAP repeats this process for each time slot $t$ until the job reaches its deadline, yielding the complete resource allocation plan and the final job utility $u$. This framework ensures that AHAP effectively adapts to dynamic cloud environments by leveraging historical commitments, predictive results, and price-threshold-driven optimizations, thereby achieving efficient and cost-effective GPU resource allocation.

We analyze the performance of the AHAP algorithm and establish an upper bound on the utility difference between AHAP and the offline optimal strategy. In particular, we consider the impact of prediction errors when forecasting. Since multi-step predictions tend to accumulate errors over time, and prediction errors generally increase as the prediction window lengthens, we formally define the upper bound on the utility function's prediction error as follows.
\begin{defi}
\label{definition:1}
The sequence of true data $ {\boldsymbol{{y_{{\omega+1}}}}, \ldots ,\boldsymbol{y_{{d}}}} $ and their $\omega$-step ahead predictions $ {\boldsymbol{y_{{\omega+1}\left| 1 \right.}}, \ldots ,\boldsymbol{y_{{d}\left| {d}-\omega \right.}}} $ in function $u\left(  \cdot  \right)$ satisfy a $\omega$-step prediction budget ${G_{\omega,d}}$ if
\begin{equation}
    \sum\limits_{t = \omega+1}^d {\mathop {\sup }\limits_{\boldsymbol x \in \Delta } \left| {u\left( {\boldsymbol x,\boldsymbol{y_t}} \right) - u\left( {\boldsymbol x,\boldsymbol{y_{t\left| {t - \omega} \right.}}} \right)} \right|}  \le {G_{\omega,d}}.
    \label{prediction budget}
\end{equation}
\end{defi}

Furthermore, we investigate the relationship between prediction error and the price threshold $\sigma$. We assume that at prediction level $\omega$, the predicted spot availability with price below threshold $\sigma$ is limited to the  range $\left[ {0,{D_{\omega,\sigma }}} \right]$.

Theorem \ref{theorem:1} establishes a performance guarantee for the AHAP algorithm under the given assumptions on the utility function's prediction error. 
Smaller prediction errors yield a tighter upper bound on the performance gap. 
The bound alse captures two key trade-offs: a higher commitment level $v$ enhances stability but reduces responsiveness, while a lower price threshold $\sigma$ improves cost efficiency at the risk of spot shortages. Notably, the scenario-specific hyperparameter $\sigma$ introduces a prediction error, widening the performance gap from optimality compared to the original CHC algorithm. This distinction forms the core contribution of our algorithm, allowing it to better address the dynamic challenges of volatile spot markets.
\begin{theorem}
\label{theorem:1}
Assuming that the prediction budget of the utility function $U$ follows (\ref{prediction budget}), then for algorithm \ref{algorithm:1}, we have:
{\small
\begin{equation}
    \sup\left\{U\left({OPT}\right)-U\left({AHAP}\right)\right\} \le \frac{2}{v}\sum\limits_{k = 1}^v {{G_{k,d}} + \frac{{\sigma {p^o}d}}{v}}\sum\limits_{k = 1}^v {{D_{k,\sigma }}} 
    \label{chc regret}
\end{equation}
}
\end{theorem}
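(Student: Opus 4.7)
The plan is to decompose $U(\mathrm{OPT})-U(\mathrm{AHAP})$ into two separately bounded contributions that map cleanly onto the two summands in the statement: a \emph{prediction-error gap} inherited from the CHC backbone and a \emph{threshold-heuristic gap} contributed by the new price cutoff $\sigma$. The starting observation is that, by construction, the committed action at time $t$ is the $v$-average $x_t^{\mathrm{AHAP}}=\frac{1}{v}\sum_{k=0}^{v-1}x_t^{t-k}$, where each $x_t^{t-k}$ is the $t$-th entry of the plan computed at time $t-k$, and that plan comes either from the threshold branch (lines 5--11) or from solving (\ref{partial object}) (line 13). Since the value function has already been absorbed into the deadline by the workload-slicing reformulation and the remaining per-slot monetary cost is linear in the allocation, the total utility is additive across slots and behaves well under this averaging.

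For the prediction-error term, I would follow a CHC-style slot-by-slot analysis. For each $k\in\{1,\ldots,v\}$ I would introduce a hypothetical ``commit-with-$k$-step-prediction'' policy $\tilde x^{(k)}$ that applies $x_t^{t-k}$ at every $t$, and then bound $U(\mathrm{OPT})-U(\tilde x^{(k)})$ by $2G_{k,d}$ via two applications of Definition~\ref{definition:1}: one copy of $G_{k,d}$ absorbs the loss when the offline optimum is evaluated on the $k$-step predicted inputs instead of the true inputs, and the second copy absorbs the loss when the plan that is optimal under predicted inputs is executed on the true inputs. Since $x_t^{\mathrm{AHAP}}$ is a uniform convex combination of the $v$ plans, applying Jensen's inequality (or simply linearity of the cost term) lifts these pointwise bounds to the aggregate gap $\frac{2}{v}\sum_{k=1}^{v}G_{k,d}$.

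For the threshold-heuristic term, I would focus on slots where line 5 triggers and AHAP bypasses (\ref{partial object}) in favor of the greedy rule that accepts every predicted spot GPU priced below $\sigma p^{o}$. On such a slot, the suboptimality is dominated by the cost of spot instances that the heuristic may select unnecessarily (since being ahead of schedule does not obligate OPT to use any of them), and each such instance costs at most $\sigma p^{o}$ per slot. The assumption that the predicted sub-threshold availability at the $k$-step lookahead lies in $[0,D_{k,\sigma}]$ bounds the number of at-risk instances per plan by $D_{k,\sigma}$, so the per-plan waste is at most $\sigma p^{o}D_{k,\sigma}$ per slot. Summing over the at-most $d$ slots on which this branch is active and averaging over the $v$ committed plans yields the contribution $\frac{\sigma p^{o}d}{v}\sum_{k=1}^{v}D_{k,\sigma}$. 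Adding the two contributions gives the stated bound.

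The hard part will be the coupling between the two branches: within the $v$-element average at a fixed slot $t$, some of the $x_t^{t-k}$'s will come from the solver branch while others come from the threshold branch, depending on whether the job was ahead or behind schedule at time $t-k$. My plan is to charge each addend to the branch that produced it, so that the $2G_{k,d}$ budget and the $\sigma p^{o}D_{k,\sigma}$ budget apply in parallel per $k$ and the final sum matches the right-hand side of the theorem exactly. A secondary concern is the safety clamping to $[N^{\min},N^{\max}]$ in lines 15--16 and the $n_t^{\mathrm{avail}}$ projection on spot purchases; I would argue that projecting onto the feasible interval can only move $x_t^{\mathrm{AHAP}}$ closer to a feasible OPT allocation and therefore does not worsen either bound, although this step will need a short monotonicity check on the per-slot utility.
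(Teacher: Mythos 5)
Your proof follows the same architecture as the paper's: a CHC-style decomposition of the optimality gap into a prediction-error part (two triangle-inequality applications of the budget (\ref{prediction budget}) sandwiching an ``optimality under predicted inputs'' step) plus a separate $\sigma p^{o} D_{k,\sigma}$ charge per slot for the threshold branch, all lifted to the committed $v$-average by Jensen's inequality. The threshold accounting and the final averaging are exactly what the paper does.

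The one step that does not go through as written is the per-$k$ bound $U(\mathrm{OPT})-U(\tilde{x}^{(k)})\le 2G_{k,d}$ for your ghost policy $\tilde{x}^{(k)}$ that applies $x_t^{t-k}$ at every $t$. That policy splices together the $k$-step-ahead entries of $d$ different plans, so it is not the maximizer of any single optimization problem; the middle term of your sandwich---that the plan which is optimal under predicted inputs is at least as good as $x^{*}$ under those same predicted inputs---has no optimization to appeal to. Local optimality is only available for an entire $v$-block of a single plan over its own window. The paper therefore uses the other grouping: for each offset $k$ it partitions $[1,d]$ into length-$v$ windows, builds hybrid sequences $\boldsymbol{\xi}^{\tau}$ that swap whole windows of the offline optimum for whole locally optimal plans (so the optimality step applies window by window), telescopes, and only afterwards regroups the accumulated prediction-error terms by lookahead step to arrive at $\sum_{k=1}^{v}G_{k,d}$. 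Your bookkeeping ``the $k$-th committed plan contributes $2G_{k,d}$'' is the end result of that regrouping, not something you can establish ghost-policy by ghost-policy. Once you switch to the windowed hybrid construction, the rest of your argument---including your handling of the mixed branches within one average and the clamping to $[N^{\min},N^{\max}]$, which the paper's own proof also leaves implicit---goes through and yields the stated bound.
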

\begin{proof}[Sketch of Proof]
    We sketch the main steps of the analysis and defer the full proof in Appendix B. The decision horizon $d$ is segmented into fixed-length windows, within which local scheduling decisions $( {\boldsymbol{x}_{\tau v + 1}, \ldots ,\boldsymbol{x}_{\left( {\tau  + 1} \right)v}} )$ are computed based on predicted workloads. For each window, we bound the utility gap caused by prediction errors $G_{v,d}$ and $D_{v,\sigma}$. To facilitate analysis, we construct a sequence of hybrid policies by gradually replacing segments of the offline optimal solution $( {\boldsymbol{x}_{\tau v + 1}^*, \ldots ,\boldsymbol{x}_{\left( {\tau  + 1} \right)v}^*} )$ with these local predicted solutions. By summing the per-window gaps and averaging across all segments, we derive an overall regret bound that tightens with decreasing prediction error under a bounded prediction budget.
\end{proof}

\subsection{Online Algorithm for Non-predictive Scenarios}

To address the limitations of AHAP under large prediction errors, as indicated by Theorem \ref{theorem:1}, we introduce the Adaptive Hybrid Allocation for Non-Predictive Scenarios (AHANP) as a fallback strategy. Rather than serving as a zero-horizon variant of AHAP, AHANP is a distinct reactive algorithm tailored for settings with poor or unavailable predictions. It bypasses explicit optimization and instead relies on interpretable per-slot metrics to guide decisions, enabling rapid adaptation to external dynamics and progress deviations.
Specifically, we introduce three indicators: workload progress \( \hat{z} = z_{t-1} / z_{t-1}^{\exp} \), spot price ratio \( \hat{p} = p_t^s / (\sigma \cdot p^o) \), and availability change rate \( \hat{n} = n_t^{\text{avail}} / n_{t-1}^{\text{avail}} \). These guide adaptive decisions that (1) ensure progress toward deadlines, (2) prefer low-cost spot instances, and (3) reduce reconfiguration by promoting allocation stability.

The AHANP algorithm is presented in Algorithm 3, with full pseudocode available in Appendix A. Specifically, for each time slot $t$, AHANP computes the expected progress and gets $\hat{z},\hat{n}$ and $\hat{p}$ based on the current spot price and availability (Line 3). Then AHANP assigns the total number of instances $n_{t}$ for time slot $t$ according to following conditions (Line 4). If progress is ahead of schedule and no spot instances are available, the job remains idle (case 1). If the availability of spot instances decreases significantly, the total number of instances is reduced (case 2). If spot availability is stable but price is high, prior allocation is reused to avoid reconfiguration. (case 3\&4). If the spot price is low, all available spot instances are utilized (case 5). When the progress is behind schedule, the instance count is doubled (case 6\&7). Then AHANP allocates as many instances as possible to spot instances, with the remaining assigned to on-demand instances (Lines 6–7). Based on this allocation, it then computes job execution (Line 8). This process is repeated for each time slot \( t \) until the deadline is reached, ultimately producing a complete resource allocation plan and the final job utility $u$.

This heuristic approach optimizes resource allocation without reliance on future predictions while effectively balancing cost efficiency and workload performance.

\section{Online Policy Selection}
\label{online learning algorithm}

In this section, we construct a policy pool by varying hyperparameters of the algorithms proposed in Section~\ref{section 4}, and develop Online Policy Selection Algorithm to identify the optimal policy from the pool.

\subsection{Policy Pool}

The online GPU provisioning algorithms in Section~\ref{section 4} require hyperparameter tuning, as they may not generalize across dynamic environments with volatile spot prices and availability. Sudden data shifts (e.g., maintenance or holidays) can also degrade prior policies.

To address this, we construct a policy pool that combines multiple GPU provisioning algorithms with systematically varied hyperparameter settings to accommodate diverse environments. Specifically, AHAP policies are parameterized by prediction window length $\omega$, commitment level $v$, and price threshold $\sigma$, while AHANP strategies adjust the price threshold $\sigma$. The pool remains extensible to support additional algorithms and configurations, enabling robust adaptation to changing workload dynamics.

\subsection{Algorithm Overview}

We present the process of our Online Policy Selection Algorithm in Algorithm \ref{algorithm:online learning}. We consider that there are $K$ various fine-tuning jobs and we leverage the candidate policies in the policy pool to process them. The policy pool includes $M$ candidate policies. The goal of the algorithm is to search for the best policy among the candidate policy pool under different scenarios. After a job is finished with a certain allocation policy, we can calculate the utility of this candidate policy $m$ for the particular job $k$, denoted as $u_k^m$. We represent the utility vector as $\boldsymbol{u_k}=[u_k^1, u_k^2, \cdots, u_k^m, \cdots, u_k^M]$. Our algorithm works by learning a weight vector $\boldsymbol{w_k}$. The weight vector is normalized $\sum_{i=1}^{M}w_k^i=1$, and can also be viewed as a distribution from another perspective. Each element $w_{k}^{m}$ of $\boldsymbol{w_k}$ indicates the weight of candidate policy $m$, and a higher weight means that this candidate policy has generally better performance in terms of the loss. We define that completing a job represents one iteration. After each iteration $k$, we update the weight vector $\boldsymbol{w_{k+1}}$ based on the utility vector $\boldsymbol{u_k}$ we collect. The update rule follows Exponentiated Gradient (EG)~\cite{hazan2016introduction, orabona2019modern}. The learning rate $\eta$ controls the update step size, and we can obtain strong theoretical guarantees by setting a proper $\eta$. The online learning algorithm aims to minimize the regret. After a certain learning iterations, the algorithm should converge and the learned weight vector converges to a sparse vector, where the candidate policy with the highest weight stands out as the best policy.

\subsection{Theoretical Analysis}

\begin{algorithm}[t]
    \caption{Online Policy Selection Algorithm}
    \label{algorithm:online learning}
    \KwIn{A set of $M$ candidate policies.}
    \KwOut{The policy of the best performance among the candidate policy pool.}
    The weights of each policy make up a vector $\boldsymbol{w_k} \in \{ \boldsymbol{w} \in \mathbb{R}^{M}: w^i > 0, \| \boldsymbol{w} \|_{1} = 1 \}$\;
    Initialize the weight vector $\boldsymbol{w_1}=[\frac{1}{M}, \frac{1}{M}, \cdots, \frac{1}{M}]$\;
    Set the learning rate $\eta = \sqrt{\frac{2 \ln(M)}{K}}$\;
    \For {$k=1, 2, \dots, K$} {
       Receive job $k$\;
       Select the resource allocation policy for job $k$ according to the current weight vector $\boldsymbol{w_k}$\;
       \For{$m=1, 2, \dots, M$} {
         Calculate the utility $u_{k}^{m}$, assuming we apply allocation policy $m$ to finish the job $k$\;
       }
       \For{$m=1, 2, \dots, M$} {
         Update the weights $w_{k+1}^{m} \leftarrow \frac{w_{k}^{m}\exp(\eta \cdot u_{k}^{m})}{\sum_{i=1}^{M} w_{k}^{i} \exp(\eta \cdot u_{k}^{i})}$\;
       }
       The new weight vector $\boldsymbol{w_{k+1}}=[w_{k+1}^1, w_{k+1}^2, \cdots, w_{k+1}^{M}]$ is learned\;
        
    }
    \textbf{return} weight vector $\boldsymbol{w_K}$\;
\end{algorithm}

We make a theoretical analysis for the regret bound of our Online Policy Selection Algorithm. The average regret of our algorithm scales with the number of candidate policies $M$ and decays with the number of iterations $K$. As $K \rightarrow \infty$, the average regret will approach 0.

\begin{theorem}
\label{theorem:2}
Suppose the utility function $u$ is normalized and we pick the learning rate $\eta = \sqrt{\frac{2 \ln(M)}{K}}$. Then for our online learning algorithm shown in Algorithm \ref{algorithm:online learning}, we have
\begin{equation}
    \max_{\boldsymbol{y}}\sum_{k=1}^K u_k(\boldsymbol{y}) - \sum_{k=1}^K \mathbb{E}_{\boldsymbol{w}_k}[u_k] \le {\sqrt{2K\ln(M)}}.
    \label{proving regret bound}
\end{equation}
\end{theorem}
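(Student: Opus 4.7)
The plan is to carry out the classical potential-function analysis of the Hedge / multiplicative-weights algorithm, adapted to the utility-maximisation formulation of Algorithm~\ref{algorithm:online learning}. Since the update $w_{k+1}^m \propto w_k^m\exp(\eta u_k^m)$ unrolls into $w_k^m = \tfrac{1}{M}\exp(\eta\sum_{j<k}u_j^m)/Z_k$ with the appropriate normaliser $Z_k$, the natural object to track is the log-partition potential
\[
\Phi_k \;:=\; \ln\!\left(\sum_{m=1}^M \tfrac{1}{M}\exp\!\Big(\eta\sum_{j=1}^{k-1} u_j^m\Big)\right),
\]
starting from $\Phi_1=0$.

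The first step is to verify the one-step identity $\Phi_{k+1}-\Phi_k = \ln \mathbb{E}_{\boldsymbol{w}_k}[\exp(\eta u_k)]$, which telescopes to $\Phi_{K+1}=\sum_{k=1}^K \ln \mathbb{E}_{\boldsymbol{w}_k}[\exp(\eta u_k)]$. I would then derive a lower bound on $\Phi_{K+1}$ by restricting the sum inside the logarithm to the single best expert, obtaining $\Phi_{K+1}\ge \eta \max_m \sum_k u_k^m - \ln M$. Because $u_k$ is linear in its weight argument, the maximum over distributions $\boldsymbol{y}$ on the simplex is attained at a vertex, so $\max_{\boldsymbol{y}}\sum_k u_k(\boldsymbol{y})=\max_m \sum_k u_k^m$ and this lower bound already captures the left-hand side of the claim.

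The main technical step is the upper bound on each increment. Using the normalisation of $u$ into $[0,1]$ together with Hoeffding's lemma (equivalently, a Taylor bound on the exponential), I would show
\[
\ln \mathbb{E}_{\boldsymbol{w}_k}[\exp(\eta u_k)] \;\le\; \eta\,\mathbb{E}_{\boldsymbol{w}_k}[u_k]+\tfrac{\eta^2}{2},
\]
which sums to $\Phi_{K+1}\le \eta\sum_k \mathbb{E}_{\boldsymbol{w}_k}[u_k]+K\eta^2/2$. Combining with the lower bound, dividing through by $\eta$, and substituting the prescribed $\eta=\sqrt{2\ln M/K}$ gives $\tfrac{\ln M}{\eta}+\tfrac{K\eta}{2}=\sqrt{2K\ln M}$, which is exactly the regret bound stated in (\ref{proving regret bound}).

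The principal obstacle is picking the correct one-step concentration inequality so that the constant on the quadratic term is compatible with the prescribed learning rate: the tighter Hoeffding variant (with $\eta^2/8$) is too optimistic for this $\eta$, while a crude bound such as $e^x\le 1+x+x^2 e^{|x|}$ is too loose to yield the claimed leading constant. The version that matches $\eta=\sqrt{2\ln M/K}$ and produces $\sqrt{2K\ln M}$ is the $\eta^2/2$ inequality, whose careful derivation under the stated $[0,1]$ normalisation of the utility is the only genuinely delicate step; the rest — telescoping, the best-expert lower bound, and optimising the step size — is mechanical algebra.
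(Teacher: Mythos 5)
Your proof is correct, and it takes a different route from the paper's main argument. The paper proves Theorem~2 through the Online Mirror Descent machinery: negative entropy as the mirror map, the three-point identity for Bregman divergences, Pinsker's inequality to establish $1$-strong convexity with respect to $\|\cdot\|_1$, and the dual-norm bound $\|\boldsymbol{g}_k\|_\infty\le 1$, which together yield $\frac{\ln M}{\eta}+\frac{\eta K}{2}=\sqrt{2K\ln M}$ at the prescribed $\eta$. You instead run the classical Hedge potential argument on the log-partition function $\Phi_k=\ln\bigl(\sum_m \tfrac1M\exp(\eta\sum_{j<k}u_j^m)\bigr)$, with the telescoping identity, the best-expert lower bound $\Phi_{K+1}\ge\eta\max_m\sum_k u_k^m-\ln M$, and the moment bound $\ln\mathbb{E}_{\boldsymbol{w}_k}[e^{\eta u_k}]\le\eta\,\mathbb{E}_{\boldsymbol{w}_k}[u_k]+\eta^2/2$; the appeal to linearity of $u_k(\boldsymbol{y})$ to reduce $\max_{\boldsymbol{y}}$ over the simplex to $\max_m$ is consistent with the paper's setting. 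Your route is more elementary and self-contained (no Bregman or dual-norm apparatus), at the cost of being specific to the entropic regularizer, whereas the paper's OMD framing generalizes to other mirror maps. Notably, your version is also tighter than the paper's secondary KL-potential proof, which uses $e^x\le 1+x+x^2$ and lands on $\frac{\ln M}{\eta}+K\eta$ --- a quantity that at $\eta=\sqrt{2\ln M/K}$ actually evaluates to $\tfrac{3}{\sqrt 2}\sqrt{K\ln M}$, not $\sqrt{2K\ln M}$. One small correction to your closing remark: the sharper Hoeffding constant $\eta^2/8$ is not ``too optimistic'' to be usable --- a tighter per-step inequality still implies the stated upper bound at the prescribed (merely suboptimal for that bound) $\eta$; the $\eta^2/2$ version is simply the one for which the given step size is exactly optimal.
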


\begin{proof}[Sketch of Proof]
    We outline the key steps of the regret analysis and defer full details to Appendix C. Our algorithm follows a multiplicative weights update. Tracking the KL divergence $\Phi_k = D_{\mathrm{KL}}(\boldsymbol{y} \| \boldsymbol{w}_k)$, We bound the per-round change in $\eta \left( \mathbb{E}_{\boldsymbol{w}_k}[u_k] - \langle \boldsymbol{y}, u_k \rangle \right) + \eta^2$. By summing over all $K$ iterations, and choosing the learning rate $\eta = \sqrt{\frac{2\ln M}{K}}$, we obtain the total regret bound of $\mathcal{O}(\sqrt{K \ln M})$, which guarantees sublinear average regret as $K$ grows.
\end{proof}

\begin{figure*}
    \centering
    \begin{minipage}[t]{0.24\textwidth}
        \centering
        \includegraphics[width=\linewidth]{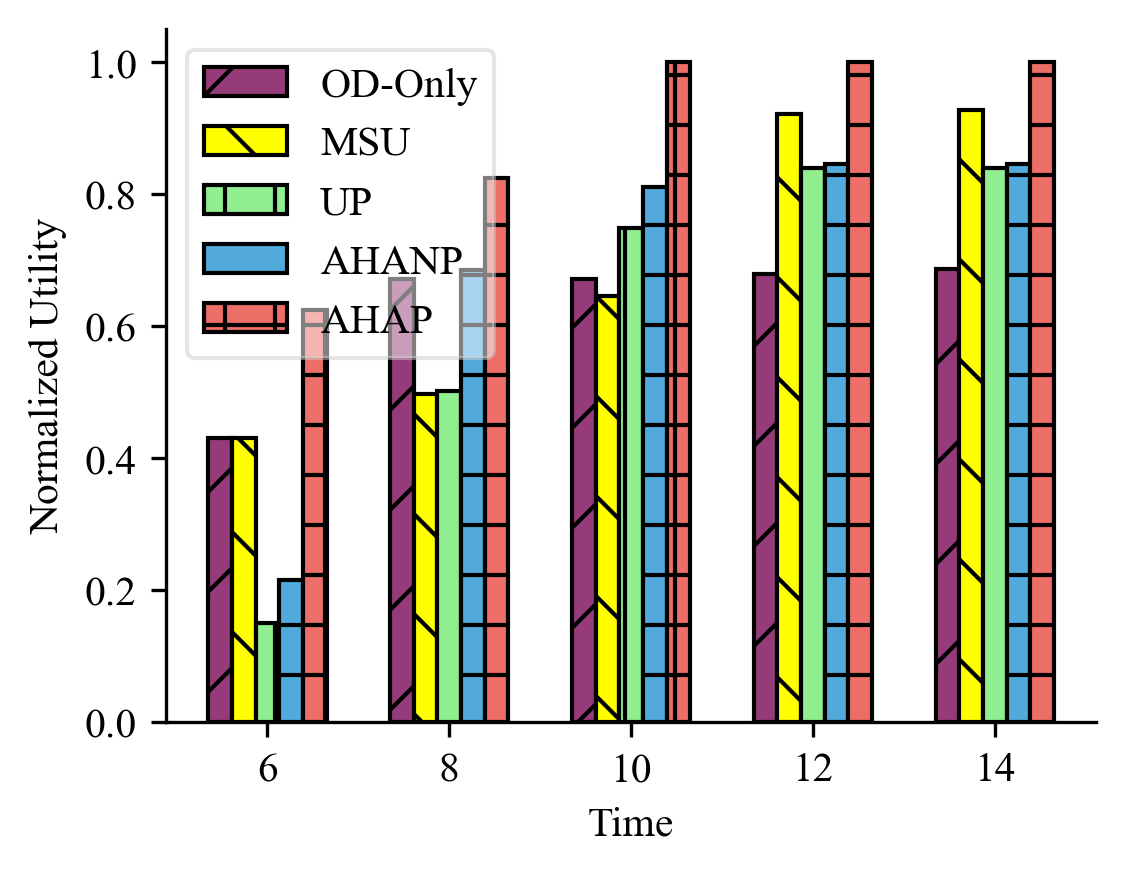}
        \captionof{figure}{Impact of Deadline.}
        \label{fig:com_ddl}
    \end{minipage}
    \hfill
    \begin{minipage}[t]{0.24\textwidth}
        \centering
        \includegraphics[width=\linewidth]{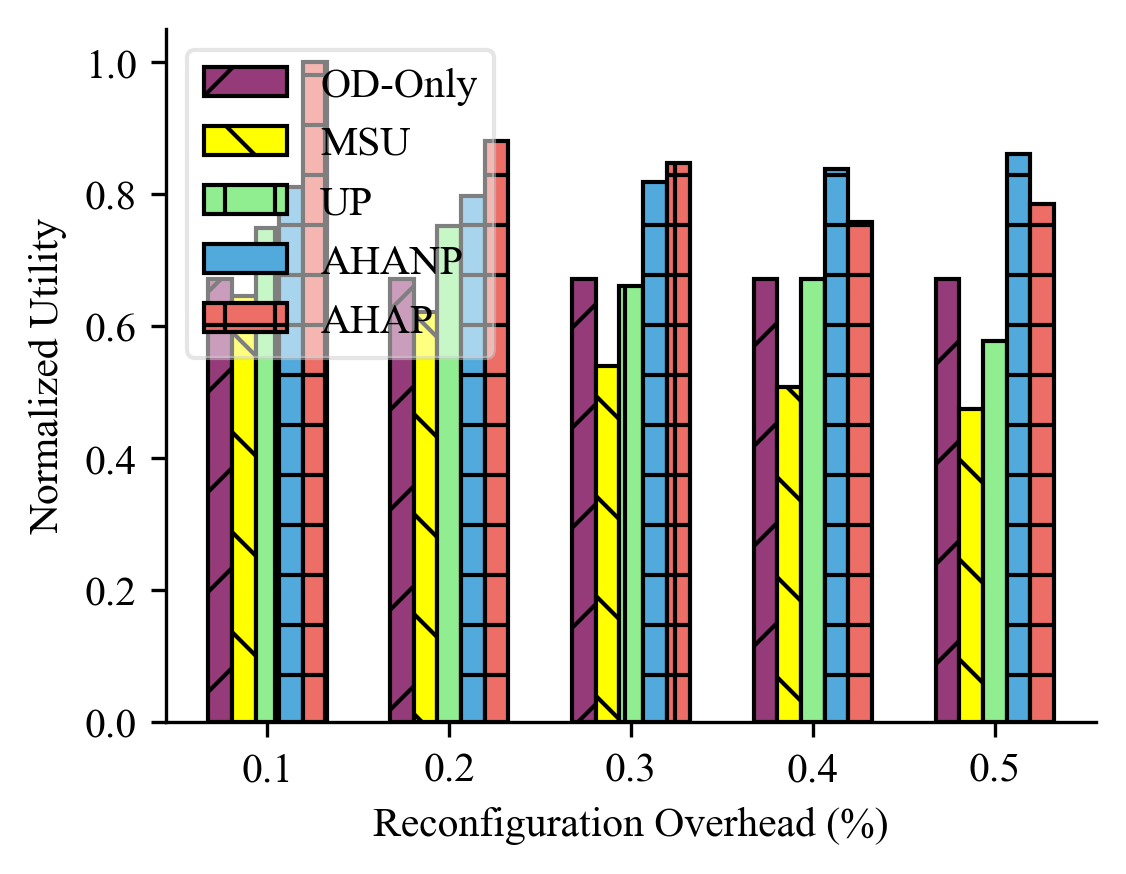}
        \captionof{figure}{Impact of Reconfiguration Overhead.}
        \label{fig:com_overhead}
    \end{minipage}
    \hfill
    \begin{minipage}[t]{0.24\textwidth}
        \centering
        \includegraphics[width=\linewidth]{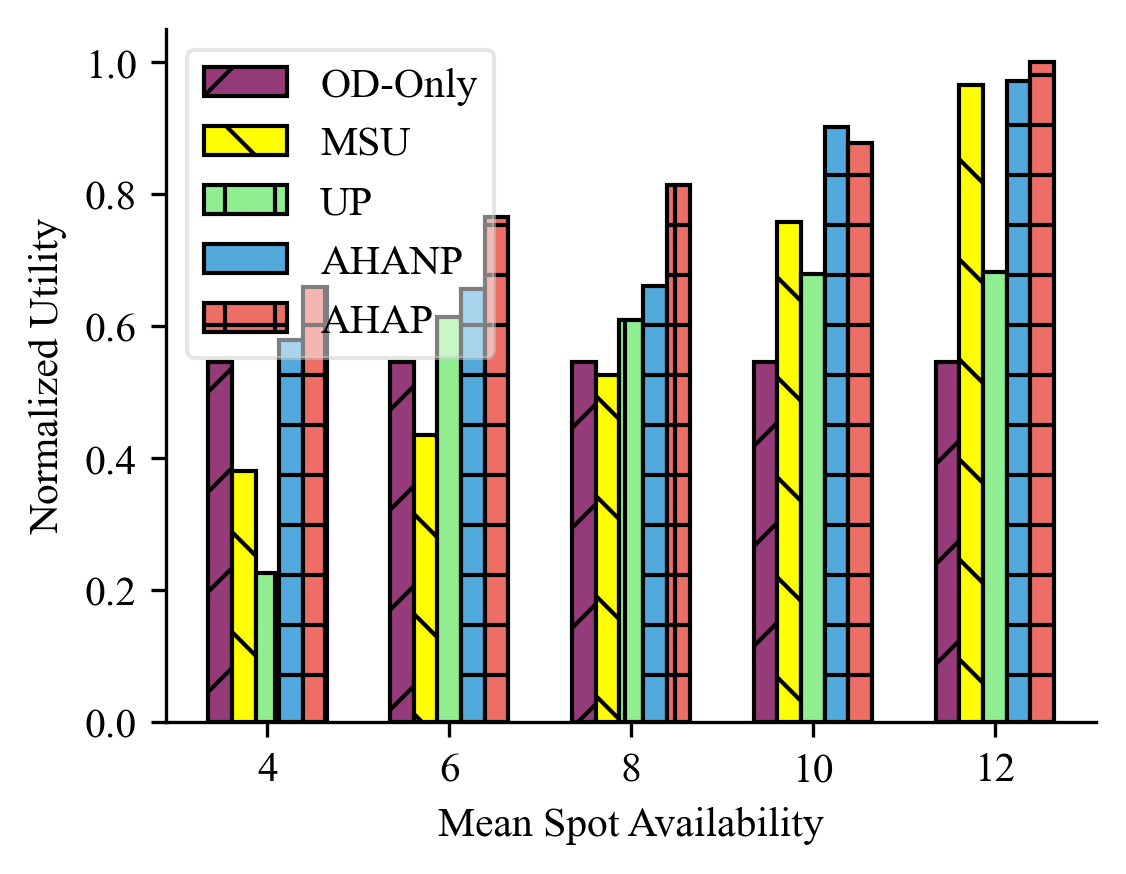}
        \captionof{figure}{Impact of Mean Spot Availability.}
        \label{fig:com_avail}
    \end{minipage}
    \hfill
    \begin{minipage}[t]{0.24\textwidth}
        \centering
        \includegraphics[width=\linewidth]{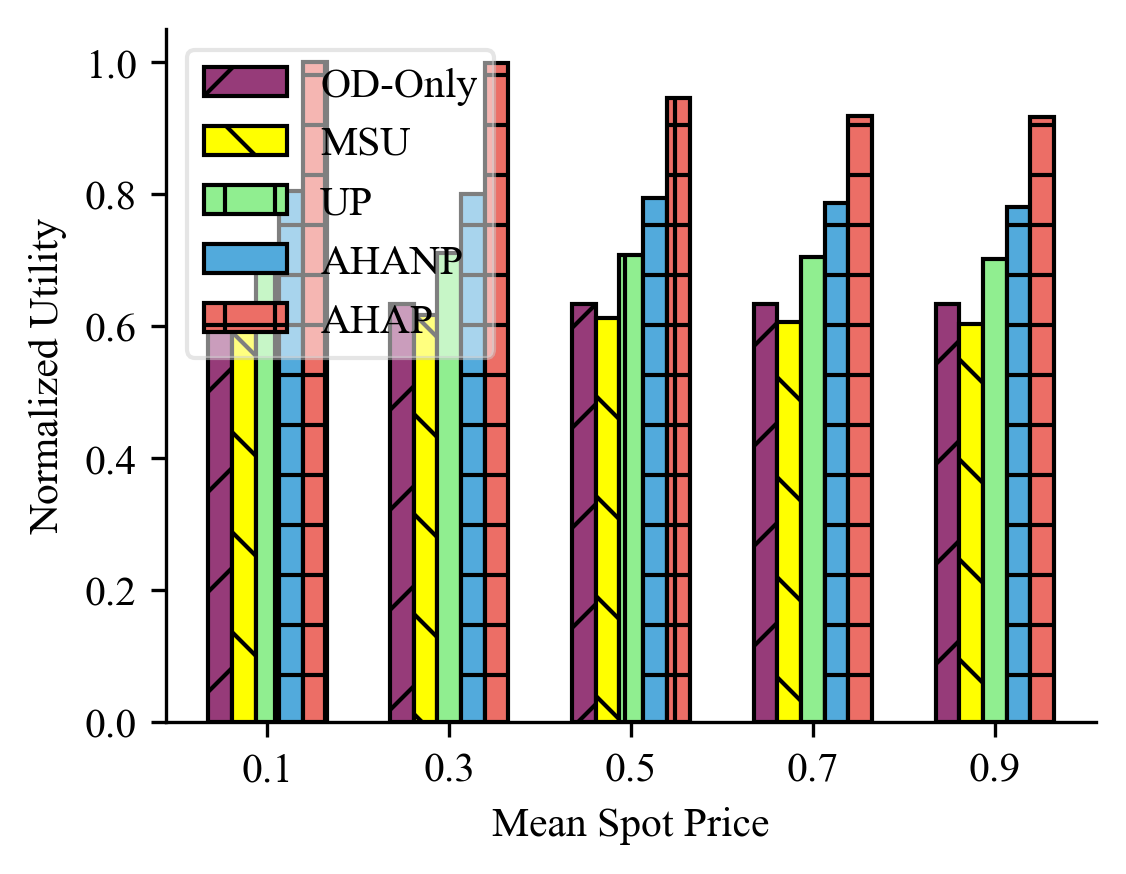}
        \captionof{figure}{Impact of Mean Spot Price.}
        \label{fig:com_price}
    \end{minipage}
    \label{fig:compare}
\end{figure*}

\section{Experimental Evaluation}

This section evaluates the proposed algorithms via simulations to validate their effectiveness, robustness, and adaptability in dynamic environments.

\begin{figure*}[htbp]
  \centering
  \begin{subfigure}[b]{0.24\textwidth}
    \centering
    \includegraphics[width=\linewidth]{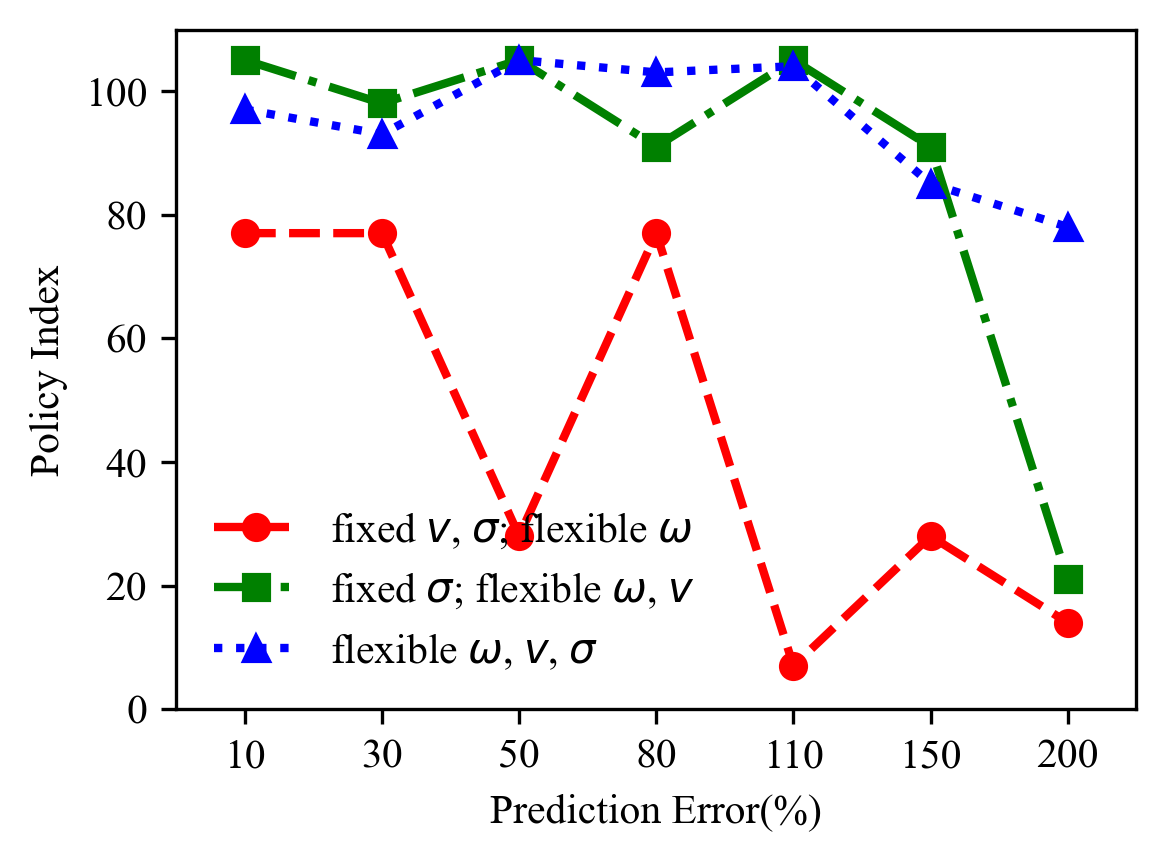}
    \caption{Mag-Dep. + Uniform}
    \label{fig5:sub1}
  \end{subfigure}
  \hfill
  \begin{subfigure}[b]{0.24\textwidth}
    \centering
    \includegraphics[width=\linewidth]{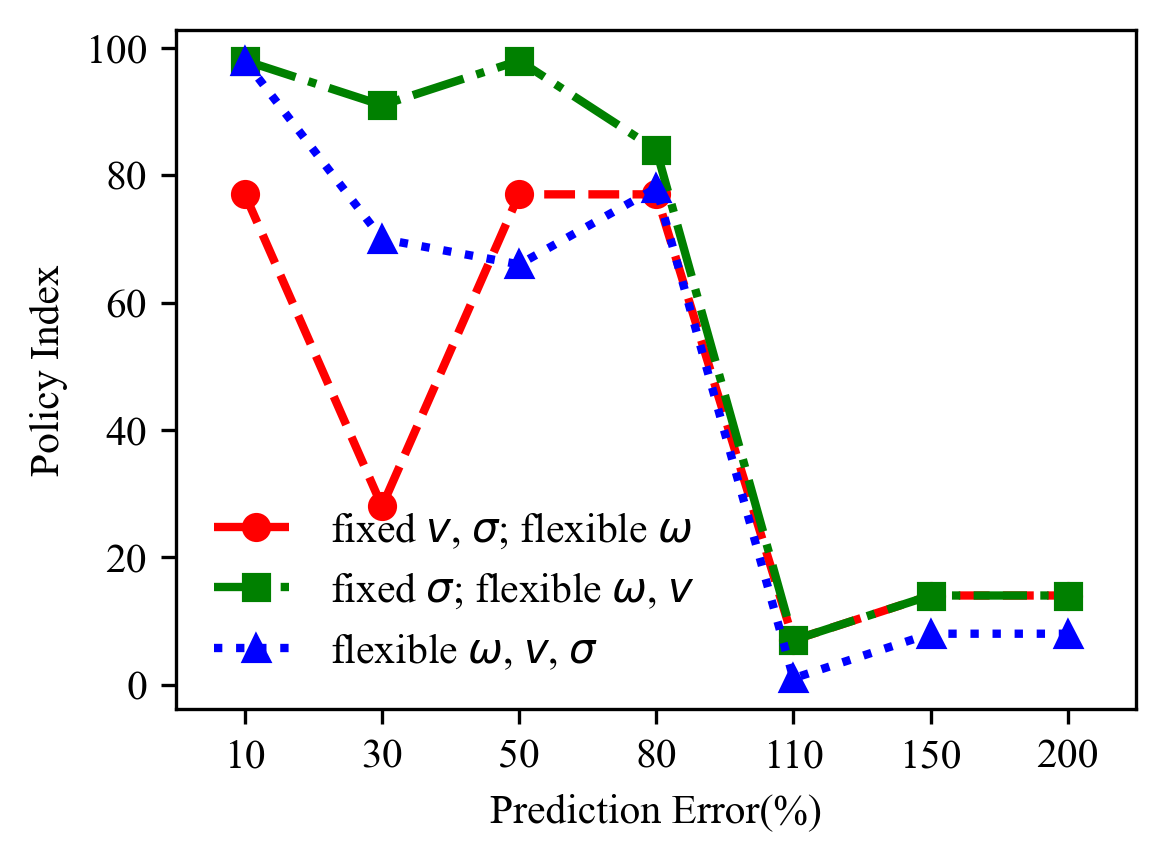}
    \caption{Fixed-Mag. + Uniform}
    \label{fig5:sub2}
  \end{subfigure}
  \hfill
  \begin{subfigure}[b]{0.24\textwidth}
    \centering
    \includegraphics[width=\linewidth]{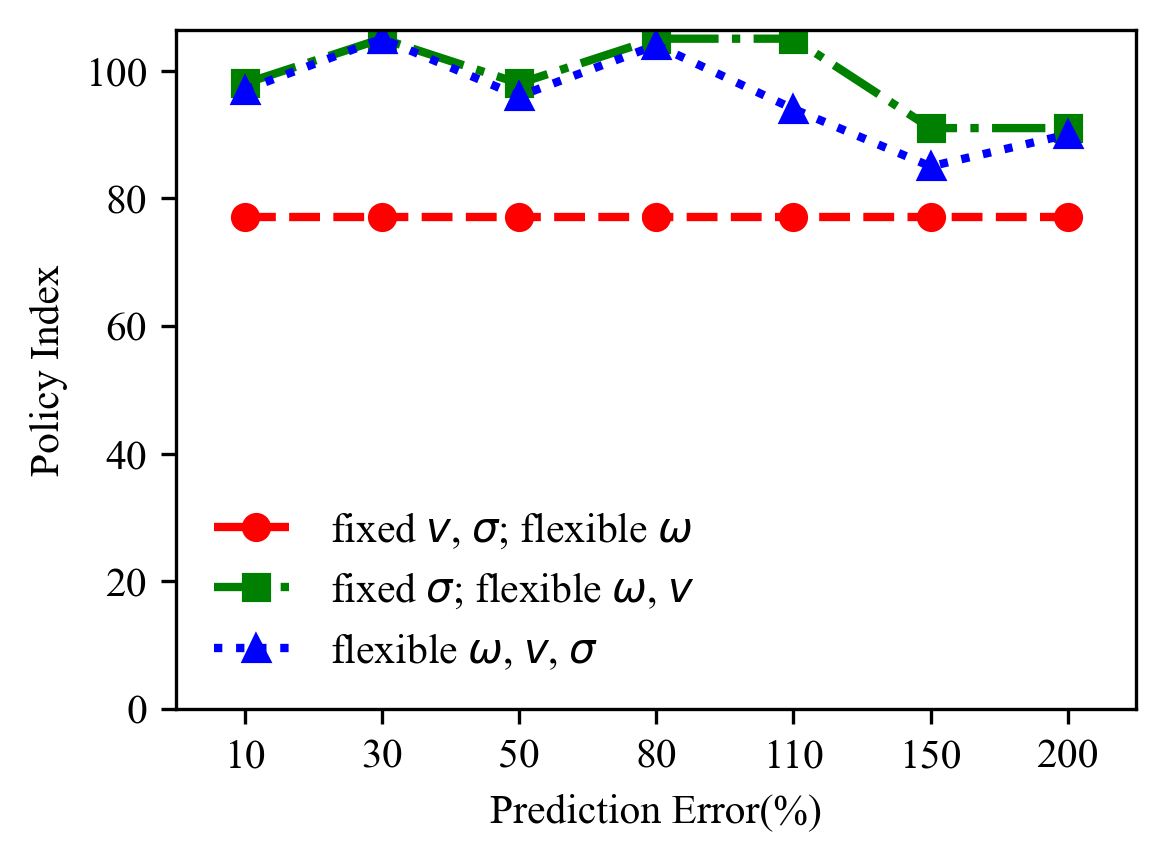}
    \caption{Mag-Dep. + Heavy-Tail}
    \label{fig5:sub3}
  \end{subfigure}
  \hfill
  \begin{subfigure}[b]{0.24\textwidth}
    \centering
    \includegraphics[width=\linewidth]{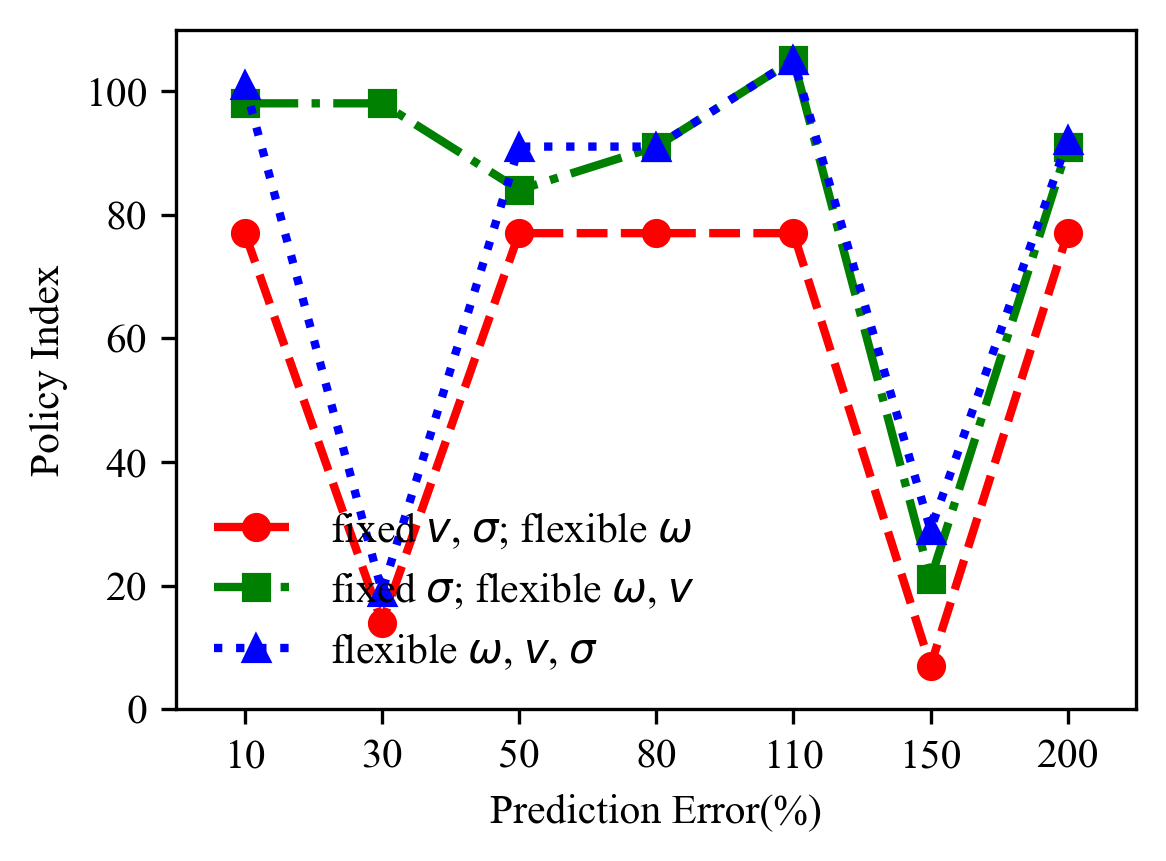}
    \caption{Fixed-Mag. + Heavy-Tail}
    \label{fig5:sub4}
  \end{subfigure}

  \caption{Convergence of Online Policy Selection under Different Prediction Noises and Hyperparameter Settings.}
  \label{fig5:all}
  \vspace{-1em} 
\end{figure*}

\subsection{Evaluation Settings}

\textbf{Cloud Service.} We simulate fine-tuning jobs on NVIDIA A100 GPUs in 30-minute time slots. Spot price traces are normalized from Vast.ai data with the on-demand price set to 1. Regional spot availability is approximated by uniformly downscaling global Vast.ai availability and capped within $\left[0, 16\right]$. To limit reconfiguration overhead, we pre-select instances with both upload and download bandwidths above 800 Mbps. While bandwidth variations impact multi-GPU scaling and reconfiguration costs, they do not alter the analytical framework.

\textbf{Fine-Tuning Jobs.} We consider fine-tuning jobs based on the LLaMA2-7B model using LoRA with a rank of 16 and a training dataset of 20 million tokens for one epoch. Such a job requires approximately 5 hours on 8 A100 GPUs~\cite{dettmers2023qlora}, corresponding to 10 time slots, , which we use as a reference deadline. Assuming unit GPU compute power equals 1, the total workload is simplified to 80. 
We use half-precision storage, and the model and data fit within the 80GB memory of a single A100 GPU; therefore, we set the minimum parallelism to \( N^{\min}=1 \). We set the maximum parallelism \( N^{\max} = 12 \) based on observations of communication overhead in multi-GPU setups \cite{nccl_user_guide}.
When reconfiguration is required, launching a new instance takes approximately 3 minutes under an 800 Mbps bandwidth. Accordingly, the proportion of effective computation time within the time slot is set to \( \mu = 0.9 \).

\textbf{Baselines.} We compare the proposed AHAP and AHANP algorithms against the following approaches: the On-Demand Only Policy (OD-Only), the Maximal Spot Utilization Policy (MSU) and Uniform Progress Policy (UP)~\cite{wu2024can}. OD-Only executes fine-tuning jobs using only on-demand instances. MSU prioritizes using all available spot instances in the early stages of job execution and switches to on-demand instances near the deadline. UP, though designed for static spot prices and synchronized availability changes, remains a relevant baseline given its adaptability to dynamic environments. At each time slot, UP compares the accumulated progress against the expected progress, which incorporates reconfiguration overhead. It prioritizes spot usage when available, and resorts to on-demand instances only when progress falls behind and spot instances are unavailable.

\textbf{Policy Pool.} To construct the policy pool, we generate 105 AHAP and 7 AHANP policies by systematically varying algorithmic hyperparameters. Specifically, for AHAP, we consider prediction window lengths $\omega \in \{1, 2, 3, 4, 5\}$. For each $\omega$, the commitment level $v$ takes integer values in $[1, \omega]$, yielding a total of 15 $(\omega, v)$ combinations. For each of these, we vary the price threshold $\sigma \in \{0.3, 0.4, \dots, 0.9\}$, resulting in $15 \times 7 = 105$ distinct AHAP policies. Here, $\omega$ controls how far to predict, $v$ determines how many predicted decisions are applied, and $\sigma$ sets the threshold below which spot instances are considered cheap and fully utilized. For AHANP, only the price threshold $\sigma$ is varied over the same 7 values, producing 7 policies.

\textbf{Prediction Noise.} To compare online algorithm convergence under different prediction errors, we define four prediction noise settings. Noise is either magnitude-dependent (Mag-Dep.) or fixed-magnitude (Fixed-Mag.), and follows either a uniform or heavy-tailed distribution. This yields four scenarios: Mag-Dep. + Uniform, Fixed-Mag. + Uniform, Mag-Dep. + Heavy-Tail, and Fixed-Mag. + Heavy-Tail.

\subsection{Evaluation Results}

\textbf{Impact of Jobs and Cloud Resource Dynamics.} Since the Online Policy Selection Algorithm can identify the better-performing policy between AHAP and AHANP, we compare each of them individually against the baselines using normalized utility, where the selected optimal policy is always the better of the two. 
Figure~\ref{fig:com_ddl} illustrates how varying deadlines affect the utility of fine-tuning jobs. AHAP consistently outperforms all baselines under both tight and relaxed deadlines. At a representative setting with deadline $=10$, AHAP improves utility by 49.0\%, 54.8\%, 33.4\%, and 23.2\% over OD-Only, MSU, UP, and AHANP, respectively.
Figure~\ref{fig:com_overhead} shows the impact of reconfiguration overhead on utility, simulated by varying network bandwidth from 100~Mbps to 800~Mbps. AHAP and AHANP consistently outperform other baselines. As reconfiguration overhead increases, all algorithms suffer utility degradation, except AHANP, which maintains stable performance. This robustness stems from its design principle of maintaining a similar number of active instances across time slots, thereby minimizing reconfiguration overhead.
Figures~\ref{fig:com_avail} and~\ref{fig:com_price} analyze the impact of average spot instance availability and price fluctuation, respectively. AHAP and AHANP remain among the top-performing algorithms across all settings. Overall, AHAP delivers the most stable performance, while AHANP exhibits advantages under limited bandwidth by avoiding frequent reconfigurations.

\textbf{Convergence under Prediction Noise.} Figure~\ref{fig5:all} illustrates the convergence of the online policy selection algorithm to the optimal policy under different types and levels of prediction noise. We also investigate the impact of fixing individual hyperparameters in the policy pool, which consists of 105 AHAP policies (e.g., fixing $v=1$ or $\sigma=0.9$). 
We simulate 1000 fine-tuning jobs with workloads uniformly distributed in $[70, 120]$, deadlines fixed at 10 time slots, and parallelism parameters combining \( N^{\min} \in [1,4] \) and \( N^{\max} \in [12,16] \).
The results show that both the type and magnitude of noise influence the optimal policy, highlighting the necessity of online adaptation in dynamic environments. Furthermore, constraining hyperparameter flexibility affects the convergence outcome, indicating that each hyperparameter in AHAP contributes to policy effectiveness. Increasing the number of flexible hyperparameters enables finer-grained policies, thereby raising the upper bound of achievable utility.

\begin{figure}[t] 
\centering
\includegraphics[width=\columnwidth]{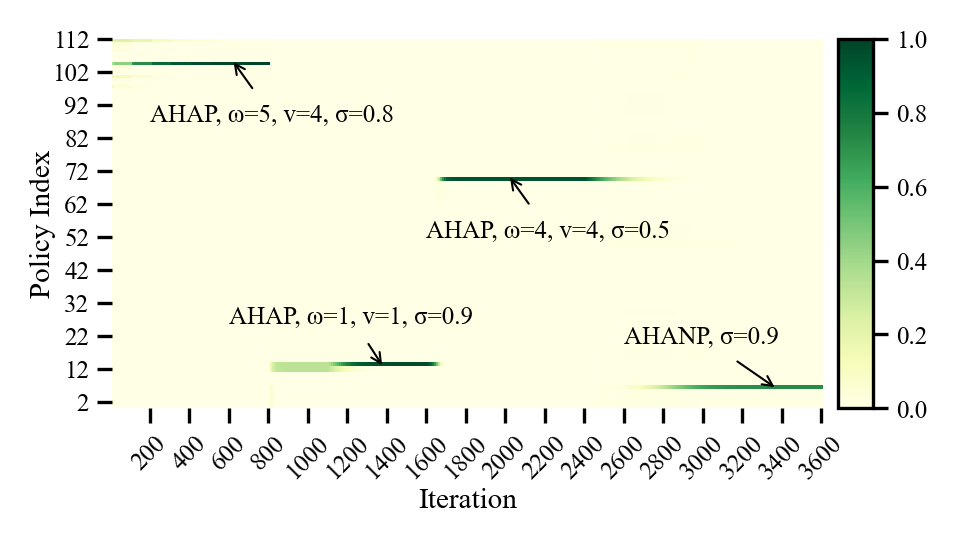} 
\caption{Policy Weight Dynamics in Online Policy Selection with Changing Prediction Quality.}
\label{fig:adaptive change}
\end{figure}

\textbf{Policy Evolution and Adaptation.} Figure~\ref{fig:adaptive change} presents a heatmap illustrating the dynamic evolution of policy weights under changing prediction environments. The prediction settings change across four phases: iterations 0–800 use Fixed-Mag. + Uniform with 10\% error; 800–1600 switch to Fixed-Mag. + Heavy-Tail with 30\% error; 1600–2400 revert to Fixed-Mag. + Uniform with 50\% error; and 2400–3600 increase the noise to 200\%. 
The policy pool includes 105 AHAP and 7 AHANP policies, indexed from 1 to 112. 
The algorithm consistently adapts to different settings and promptly converges to new optimal policies, ensuring robust and efficient resource allocation under dynamic conditions.

\section{Related work}

\textit{Spot Instance Scheduling under Preemptions.}
Spot instances offer significant cost advantages for cloud computing but pose challenges due to their inherent unreliability and risk of preemption. A key focus of recent research lies in balancing cost-efficiency with reliability under such uncertainty~
\cite{monteil2023reservation,duan2024parcae,yang2023snape,gunasekaran2022cocktail}.Several works propose deadline-aware scheduling to mitigate the impact of preemptions~\cite{wu2024can,menache2014demand,wu2023monte}. To improve robustness, redundancy and over-provisioning techniques are frequently employed~\cite{mao2025skyserve,thorpe2023bamboo}. However, these approaches do not integrate predictive information on spot price and availability for forecasting future resources.

\textit{Online Resource Scheduling in Cloud Platforms.}
Online resource scheduling refers to the process of assigning computational resources to tasks or virtual machines in real-time, without prior knowledge of future workloads. Initial work in cloud scheduling employed simple heuristic methods such as First-Come-First-Serve (FCFS) and Round Robin~\cite{buyya2009cloud,selvarani2010improved}. Recent studies have focused on reinforcement learning (RL) and other machine learning models to learn dynamic scheduling policies~\cite{liu2022deep,zhou2021multi,zhang2023lifting,gao2022titan,bao2018online}. These models adapt to workload patterns and improve over time, with promising results in both simulation and real systems. However, these methods often assume stable resource availability, limiting their effectiveness in volatile spot markets.

\textit{Prediction-Augmented Scheduling for Deadline Guarantees.}
A foundational aspect of prediction-augmented scheduling lies in accurately estimating task execution time~\cite{zhang2025deadline}. Recent approaches apply regression models, decision trees, and neural networks to learn non-linear relationships between input features and task duration~\cite{nabi2021dralba,xu2022esdnn}. To further enhance deadline adherence, some works focus on forecasting system load or future task arrivals~\cite{karim2021bhyprec,mao2019learning}. However, many existing solutions rely heavily on prediction accuracy without robust fallback mechanisms. In contrast, our approach combines forecast-driven planning with fallback resource allocation to meet deadlines cost-effectively in volatile spot markets. 

\section{Conclusions}

To minimize fine-tuning costs under deadline constraints in both predictable and unpredictable spot market conditions, we propose an analytical framework. It features a prediction-based allocation algorithm, a complementary non-predictive algorithm and further an online policy selection algorithm that adaptively chooses the best-performing policy from both. We show that the prediction-based algorithm improves with prediction accuracy, while the policy selector achieves a regret bound of $\mathcal{O}(\sqrt{T})$. Experiments demonstrate that our framework adapts effectively to varying conditions and consistently outperforms baselines, improving utility by up to 54.8\%. The framework is flexible and remains adaptable to future prediction and allocation strategies.

\clearpage
\bibliography{refs}

\clearpage
\appendix
\section{Appendix}

\subsection{Pseudocode of AHANP Algorithm}
\label{appendix:ahanp}
\begin{algorithm}
    \caption{Adaptive Hybrid Allocation Algorithm for Non-Predictive Scenarios (AHANP)}
    \label{algorithm:2}
    \KwIn{\{$L$,$d$,$N^{min}$,$N^{max}$\},\{$p^o$, $p_t^s$, $n_{t}^{avail}$ \},$\sigma$.}
    \KwOut{$u$, $n_{t}^{o}$, $n_{t}^{s}$, $t \in \left[ {1,{d}} \right]$ .}
    Initialize $Z_{0}=0,n_{0}=0$\;
    \For {$t=1,2,\dots, d$} {
        Calculate expected progress $Z_{t-1}^{\exp}$ at current time slot according to (\ref{expected progress}) and get $\hat{z},\hat{n},\hat{p}$\;

        \tcp{\emph{Assign $n_{t}$ according to $\hat{z},\hat{n}$and $\hat{p}$}}
        $n_{t} \leftarrow$
        \[
        \left\{
        \begin{array}{ll}
        0 & \text{if }\hat{z} \ge 1 , \hat{n} = 0\\
        \max \left\{0.5 \times n_{t-1}, N^{min} \right\} & \text{if } \hat{z} \ge 1 ,0 < \hat{n} \le 0.5\\
        n_{t-1} & \text{if } \hat{z} \ge 1 ,0.5 < \hat{n} \le 1\\
        n_{t-1} & \text{if } \hat{z} \ge 1,\hat{n} > 1, \hat{p} > 1 \\
        \max \left\{ n_{t-1}, n_t^{avail} \right\} & \text{if }\hat{z} \ge 1, \hat{n} > 1, \hat{p} \le 1\\
        N^{min}, & \text{if } \hat{z} < 1 ,\hat{n} = \infty\\
        2 \times n_{t-1}, & \text{if }\hat{z} < 1 ,\hat{n} < \infty
        \end{array}
        \right.
        \]
         
         Limit $n_{t}$ in range of $\left[ {N^{\min },N^{\max }} \right]$\;
        $n_{t}^s \leftarrow \min \{n_t^{avail}, n_{t}\}$\;
        $n_{t}^o \leftarrow n_{t}-n_{t}^s$\;
        Process the job and update the job progress $Z_{t}\leftarrow Z_{t-1} + \eta_t \cdot H(n_{t}^{s} + n_{t}^{o})$\; 

        \If{$Z_{t} \ge L$} {
          \textbf{break} from line 2\;
        }
    }
    Calculate utility $u$ according to (\ref{new utility})\;   
    \textbf{return} allocation decisions and job utility $u$\;
\end{algorithm}

\subsection{Proof of Theorem 1}
\label{proof 1}

\begin{proof}
We first analyze the utility difference between the algorithm \ref{algorithm:1} and the optimal result before averaging the decision results. 
Let ${N_k}$ denote the number of time steps $t = 1,2, \ldots d$ that satisfy the condition $t\bmod v = k$, and let these time points be denoted by $\left( {{t^{k,1}}, \ldots ,{t^{k,{N_k}}}} \right)$. 
Let $\boldsymbol{x}_t$ be the decision vector at time slot $t$. Let $\boldsymbol{y}_t$ be the vector to be predicted and $\boldsymbol{{y}}_{{t_1}, \ldots, {t_2}\left| t \right.}$ be the prediction results from $t_1$ to $t_2$ at time $t$. 
Let $\left\{ {\boldsymbol{x}^*} \right\} = \boldsymbol{x}_1^*, \ldots ,\boldsymbol{x}_d^*$ denote the set of offline optimal solutions.
According to algorithm \ref{algorithm:1}, at time slot $\tau v$, if current workload is less than the expected workload, we solve the following problem: 
\begin{equation}
    \max {U_{\tau v + 1, \ldots ,\left( {\tau  + 1} \right)v}}\left( {\boldsymbol{x},{\boldsymbol{y}_{\left. {\tau v + 1, \ldots ,\left( {\tau  + 1} \right)v} \right|\tau v}}} \right),
    \nonumber
\end{equation}
and get the locally optimal solutions $\left( {\boldsymbol{x}_{\tau v + 1}^k, \ldots ,\boldsymbol{x}_{\left( {\tau  + 1} \right)v}^k} \right)$.
The locally optimal solutions, prediction results, and the locally utility function are denoted by $\boldsymbol{x}_{\tau ,v}^k$, ${\boldsymbol{y}_{\left.  \cdot  \right|\tau v}}$, and ${U_{\tau ,v}}\left( {\boldsymbol{x},{\boldsymbol{y}_{\left.  \cdot  \right|\tau v}}} \right)$, respectively.

We define ${N_k}$ sequences $\left\{ {\boldsymbol{\xi} ^1, \ldots ,\boldsymbol{\xi} ^{{N_k}}} \right\}$, each of length $d$, where $\boldsymbol{\xi} ^1$ is the offline optimal solutions $\left\{ {\boldsymbol{x}^*} \right\}$, and $\boldsymbol{\xi} _t^1 = \boldsymbol{x}_t^*$, for $t \in \left[ {1,d} \right]$. 
For $\tau  \in \left[ {1,{N_k}} \right]$, sequence $\boldsymbol{\xi} ^\tau $ satisfies the following update rule: 
\begin{equation}
    {\boldsymbol{\xi} ^{\tau  + 1}} = \left( {\boldsymbol{\xi} _1^\tau , \ldots ,\boldsymbol{\xi} _{\tau v}^\tau ,\boldsymbol{x}_{\tau v + 1}^k, \ldots ,\boldsymbol{x}_{\left( {\tau  + 1} \right)v}^k,\boldsymbol{\xi} _{\left( {\tau  + 1} \right)v + 1}^\tau , \ldots ,\boldsymbol{\xi} _d^\tau}  \right).
    \nonumber
\end{equation}

By the definition of local optimality under the predicted window, we have 
\begin{equation}
    U_{\tau ,v}\left( \boldsymbol{x}_{\tau ,v}^*,\, \boldsymbol{y}_{\,\cdot\,| \tau v} \right) 
    \le 
    U_{\tau ,v}\left( \boldsymbol{x}_{\tau ,v}^k,\, \boldsymbol{y}_{\,\cdot\,| \tau v} \right).
    \nonumber
\end{equation}

Hence the difference of the actual utility satisfies: 
\begin{equation}
\begin{aligned}
    & U_{\tau, v}\left(\boldsymbol{x}_{\tau, v}^*, \boldsymbol{y}_{\tau, v}\right) 
      - U_{\tau, v}\left(\boldsymbol{x}_{\tau, v}^k, \boldsymbol{y}_{\tau, v}\right) \\
\le\; & U_{\tau, v}\left(\boldsymbol{x}_{\tau, v}^*, \boldsymbol{y}_{\tau, v}\right) 
      - U_{\tau, v}\left(\boldsymbol{x}_{\tau, v}^*, \boldsymbol{y}_{\,\cdot\,| \tau v}\right) \\
 & + U_{\tau, v}\left(\boldsymbol{x}_{\tau, v}^k, \boldsymbol{y}_{\,\cdot\,| \tau v}\right) 
      - U_{\tau, v}\left(\boldsymbol{x}_{\tau, v}^k, \boldsymbol{y}_{\tau, v}\right)  + \sigma p^o \sum_{t=1}^{v} D_{t, \sigma},
\end{aligned}
\nonumber
\end{equation}
where the last term accounts for the prediction error when considering actual progress exceeding the expected progress.

Summing these inequalities from $\tau  = 1$ to $\tau  = {N_k}$, we have 
\begin{align}
& U_d\left( \boldsymbol{\xi}^1, \boldsymbol{y} \right) 
  - U_d\left( \boldsymbol{\xi}^{N_k}, \boldsymbol{y} \right) \notag 
= U(\mathrm{OPT}) 
  - U_d\left( \boldsymbol{\xi}^{N_k}, \boldsymbol{y} \right) \notag \\
\le\; & \sum_{\tau = 1}^{N_k} \left( 
  U_{\tau, v} \left( \boldsymbol{x}_{\tau, v}^*, \boldsymbol{y}_{\tau, v} \right) 
  - U_{\tau, v} \left( \boldsymbol{x}_{\tau, v}^*, \boldsymbol{y}_{\,\cdot\,| \tau v} \right) 
  \right) \notag \\
& + \sum_{\tau = 1}^{N_k} \left( 
  U_{\tau, v} \left( \boldsymbol{x}_{\tau, v}^k, \boldsymbol{y}_{\,\cdot\,| \tau v} \right) 
  - U_{\tau, v} \left( \boldsymbol{x}_{\tau, v}^k, \boldsymbol{y}_{\tau, v} \right) 
  \right) \notag \\
& + \sigma p^o \sum_{\tau = 1}^{N_k} \sum_{t = 1}^v D_{t, \sigma}. \notag
\end{align}

By summing the actual utility differences across all $k = 1, \ldots ,v$, taking the average, and applying the Jensen's inequality, we have 
\begin{align}
& U(\mathrm{OPT}) - U(\mathrm{AHAP}) \notag \\
\le\; & \frac{1}{v} \sum_{k=1}^{v} \sum_{\tau=1}^{N_k} 
\left( 
  U_{\tau, v} \left( \boldsymbol{x}_{\tau, v}^*, \boldsymbol{y}_{\tau, v} \right)
  - U_{\tau, v} \left( \boldsymbol{x}_{\tau, v}^*, \boldsymbol{y}_{\,\cdot\,| \tau v} \right)
\right) \notag \\
& + \frac{1}{v} \sum_{k=1}^{v} \sum_{\tau=1}^{N_k} 
\left( 
  U_{\tau, v} \left( \boldsymbol{x}_{\tau, v}^k, \boldsymbol{y}_{\,\cdot\,| \tau v} \right)
  - U_{\tau, v} \left( \boldsymbol{x}_{\tau, v}^k, \boldsymbol{y}_{\tau, v} \right)
\right) \notag \\
& + \frac{\sigma p^o}{v} \sum_{k=1}^{v} \sum_{\tau=1}^{N_k} \sum_{t=1}^{v} D_{t, \sigma} \notag \\
\le\; & \frac{1}{v} \sum_{k=1}^{v} \sum_{\tau=1}^{N_k} 
\left| 
  U_{\tau, v} \left( \boldsymbol{x}_{\tau, v}^*, \boldsymbol{y}_{\tau, v} \right)
  - U_{\tau, v} \left( \boldsymbol{x}_{\tau, v}^*, \boldsymbol{y}_{\,\cdot\,| \tau v} \right)
\right| \notag \\
& + \frac{1}{v} \sum_{k=1}^{v} \sum_{\tau=1}^{N_k} 
\left| 
  U_{\tau, v} \left( \boldsymbol{x}_{\tau, v}^k, \boldsymbol{y}_{\,\cdot\,| \tau v} \right)
  - U_{\tau, v} \left( \boldsymbol{x}_{\tau, v}^k, \boldsymbol{y}_{\tau, v} \right)
\right| \notag \\
& + \frac{\sigma p^o d}{v} \sum_{t=1}^{v} D_{t, \sigma} \notag \\
\le\; & \frac{2}{v} \sum_{k=1}^{v} G_{k,d}
+ \frac{\sigma p^o d}{v} \sum_{k=1}^{v} D_{k, \sigma}. \notag
\end{align}
\end{proof}

\subsection{Proof of Theorem 2}
\label{proof 2}

The online learning algorithm that we apply is an extension of OMD algorithm. The update of parameters in traditional OMD algorithm is defined as
\begin{equation}
    \boldsymbol{w_{k+1}} = \arg\min_{\boldsymbol{w} \in V} (\langle \boldsymbol{g_k}, \boldsymbol{w} \rangle + \frac{1}{\eta}B_{\psi}(\boldsymbol{w};\boldsymbol{w_k})),
    \notag
\end{equation}
where $\boldsymbol{g_k} \in \partial u_k(\boldsymbol{w_k})$, $V$ is the search space, and $B_{\psi}$ is the Bregman divergence.
In the design of our online learning algorithm, we set $\psi(\boldsymbol{w}) = \sum_{i=1}^M w_i\ln(w_i)$. Then, we can get the update rule of our online learning algorithm, as is shown in Algorithm \ref{algorithm:online learning}
\begin{equation}
    w_{k+1}^i = \frac{x_{k}^{i}\exp(\eta \cdot u_{k}^{i})}{\sum_{t=1}^{M} x_{k}^{t} \exp(\eta \cdot u_{k}^{t})}.
    \notag
\end{equation}

First, we introduce the concept of Bregman divergence and dual norm, which is prerequisite for proving Theorem \ref{theorem:2}.

\begin{defi}
\label{definition:1}
Let $\psi : X \rightarrow \mathbb{R}$ be strictly convex and differentiable on X. The \textbf{Bregman Divergence} with respect to $\psi$ is denoted by $B_{\psi}$, defined as
\begin{equation}
    B_{\psi}(\boldsymbol{x};\boldsymbol{y}) = \psi(\boldsymbol{x}) - \psi(\boldsymbol{y}) - \langle \nabla \psi(\boldsymbol{y}), \boldsymbol{x} - \boldsymbol{y} \rangle.
    \notag
\end{equation}
Bregman divergence is a similarity measure between $\boldsymbol{x}$ and $\boldsymbol{y}$. Note that the Bregman divergence is not symmetric. 
\end{defi}

\begin{lemma}
\label{lemma:1}
Let $V = \{ \boldsymbol{x} \in \mathbb{R}^M :  x_i > 0, \| \boldsymbol{x} \|_1 = 1\}$, $X = \mathbb{R}^{M}_{\ge 0}$, and $\psi(\boldsymbol{x})=\sum_{i=1}^M x_i\ln(x_i)$, the negative entropy. Then, according to the definition of the Bregman divergence in Definition \ref{definition:1}, for all $\boldsymbol{x}, \boldsymbol{y} \in X$, we have
\begin{equation}
    \begin{aligned}
    B_{\psi}(\boldsymbol{x};\boldsymbol{y}) &= \psi(\boldsymbol{x}) - \psi(\boldsymbol{y}) - \nabla \psi(\boldsymbol{y})^{\top}(\boldsymbol{x} - \boldsymbol{y}) \\
    &= \sum_{i=1}^M (x_i\ln(x_i)-y_i\ln(y_i)-(\ln(y_i)+1)(x_i-y_i)) \\
    &= \sum_{i=1}^M (x_i\ln(\frac{x_i}{y_i})-x_i+y_i).
    \end{aligned}
    \nonumber
\end{equation}
This is a special case of Bregman divergence, called as the generalized Kullback-Leibler divergence (KL divergence).
\end{lemma}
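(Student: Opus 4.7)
The plan is to establish the identity by direct computation from the definition of the Bregman divergence, since $\psi(\boldsymbol{y}) = \sum_{i=1}^M y_i \ln(y_i)$ is additively separable and smooth on the interior of $X$. First I would compute the gradient coordinate-wise: because $\frac{d}{dy_i}\bigl(y_i \ln(y_i)\bigr) = \ln(y_i) + 1$ by the product rule, the gradient $\nabla \psi(\boldsymbol{y})$ has $i$-th entry $\ln(y_i) + 1$. This is the only analytic ingredient needed.

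Next I would substitute into $B_\psi(\boldsymbol{x};\boldsymbol{y}) = \psi(\boldsymbol{x}) - \psi(\boldsymbol{y}) - \langle \nabla \psi(\boldsymbol{y}),\, \boldsymbol{x} - \boldsymbol{y} \rangle$ and pull everything into a single sum over $i$. The inner product expands as $\sum_i (\ln(y_i) + 1)(x_i - y_i)$, which contributes $-x_i \ln(y_i) + y_i \ln(y_i)$ plus a residual $-(x_i - y_i)$ from the constant $1$ term. Combining with $\psi(\boldsymbol{x}) - \psi(\boldsymbol{y}) = \sum_i \bigl(x_i \ln(x_i) - y_i \ln(y_i)\bigr)$, the $y_i \ln(y_i)$ terms cancel, while $x_i \ln(x_i) - x_i \ln(y_i) = x_i \ln(x_i/y_i)$ by a single logarithm identity. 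What remains is exactly $-x_i + y_i$, giving the claimed expression $\sum_{i=1}^M \bigl(x_i \ln(x_i/y_i) - x_i + y_i\bigr)$.

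This is a routine verification, so there is no real obstacle; the only subtlety is the boundary behavior of $\psi$ at coordinates where $x_i = 0$ or $y_i = 0$. I would invoke the standard convention $0 \ln 0 = 0$ (justified by continuity of $t \ln t$ at the origin) and note that the formula requires $y_i > 0$ whenever $x_i > 0$ for the logarithm to be finite. Since Lemma~\ref{lemma:1} is subsequently applied to the iterates $\boldsymbol{w}_k$ of Algorithm~\ref{algorithm:online learning}, which lie in the open simplex $V$ with strictly positive components, this does not affect any downstream argument. Finally, I would conclude by pattern-matching the resulting expression against the standard definition of the generalized Kullback--Leibler divergence, which completes the identification.
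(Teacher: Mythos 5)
Your proposal is correct and follows essentially the same route as the paper: compute $\nabla\psi(\boldsymbol{y})_i = \ln(y_i)+1$, substitute into the definition of $B_\psi$, and simplify term by term to obtain the generalized KL divergence. Your additional remark on the $0\ln 0$ convention and the positivity of the iterates $\boldsymbol{w}_k$ is a harmless refinement the paper leaves implicit.
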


\begin{lemma}
\label{lemma:2}
Let $B_{\psi}$ the Bregman divergence w.r.t. $\psi: X \rightarrow \mathbb{R}$. Then, for any three points $\boldsymbol{x}, \boldsymbol{y}$ and $\boldsymbol{z} \in X$, the following identity holds
\begin{equation}
    B_{\psi}(\boldsymbol{z};\boldsymbol{x}) + B_{\psi}(\boldsymbol{x};\boldsymbol{y}) - B_{\psi}(\boldsymbol{z};\boldsymbol{y}) = \langle \nabla\psi(\boldsymbol{y}) - \nabla\psi(\boldsymbol{x}), \boldsymbol{z} - \boldsymbol{x} \rangle.
    \nonumber
\end{equation}
\end{lemma}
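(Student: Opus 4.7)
The plan is to prove this three-point identity by direct algebraic expansion using the definition of the Bregman divergence stated in Definition 1, namely $B_\psi(\boldsymbol{u};\boldsymbol{v}) = \psi(\boldsymbol{u}) - \psi(\boldsymbol{v}) - \langle \nabla\psi(\boldsymbol{v}), \boldsymbol{u}-\boldsymbol{v}\rangle$. The identity is an algebraic one; it does not use strict convexity of $\psi$, only that $\psi$ is differentiable at $\boldsymbol{x}$ and $\boldsymbol{y}$ so that the two gradients appearing in the three divergences are well-defined. Because the result is a straightforward manipulation, the proof proposal is essentially a careful bookkeeping plan rather than a structural argument.

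First, I would expand each of the three Bregman terms on the left-hand side. The term $B_\psi(\boldsymbol{z};\boldsymbol{x})$ produces $\psi(\boldsymbol{z})-\psi(\boldsymbol{x})-\langle\nabla\psi(\boldsymbol{x}),\boldsymbol{z}-\boldsymbol{x}\rangle$; the term $B_\psi(\boldsymbol{x};\boldsymbol{y})$ produces $\psi(\boldsymbol{x})-\psi(\boldsymbol{y})-\langle\nabla\psi(\boldsymbol{y}),\boldsymbol{x}-\boldsymbol{y}\rangle$; and the subtracted term $-B_\psi(\boldsymbol{z};\boldsymbol{y})$ contributes $-\psi(\boldsymbol{z})+\psi(\boldsymbol{y})+\langle\nabla\psi(\boldsymbol{y}),\boldsymbol{z}-\boldsymbol{y}\rangle$. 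Summing these, the six scalar $\psi(\cdot)$ terms cancel in pairs: $\psi(\boldsymbol{z})$ with $-\psi(\boldsymbol{z})$, $-\psi(\boldsymbol{x})$ with $\psi(\boldsymbol{x})$, and $-\psi(\boldsymbol{y})$ with $\psi(\boldsymbol{y})$.

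Next, I would collect the three remaining inner-product terms by the point at which $\nabla\psi$ is evaluated. The only contribution anchored at $\boldsymbol{x}$ is $-\langle\nabla\psi(\boldsymbol{x}),\boldsymbol{z}-\boldsymbol{x}\rangle$. The two contributions anchored at $\boldsymbol{y}$ combine, by bilinearity of the inner product, as $\langle\nabla\psi(\boldsymbol{y}),(\boldsymbol{z}-\boldsymbol{y})-(\boldsymbol{x}-\boldsymbol{y})\rangle = \langle\nabla\psi(\boldsymbol{y}),\boldsymbol{z}-\boldsymbol{x}\rangle$. A second application of bilinearity then merges the $\boldsymbol{x}$-anchored and $\boldsymbol{y}$-anchored contributions into the single expression $\langle\nabla\psi(\boldsymbol{y})-\nabla\psi(\boldsymbol{x}),\boldsymbol{z}-\boldsymbol{x}\rangle$, which is exactly the right-hand side of the claimed identity.

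There is no genuine obstacle in this argument, since no inequality, convexity property, or limiting procedure is invoked. The only place where a mistake is likely is the sign bookkeeping when subtracting $B_\psi(\boldsymbol{z};\boldsymbol{y})$, and the recognition that the key cancellation $(\boldsymbol{z}-\boldsymbol{y})-(\boldsymbol{x}-\boldsymbol{y})=\boldsymbol{z}-\boldsymbol{x}$ is what makes the two $\nabla\psi(\boldsymbol{y})$ terms collapse into a single inner product whose direction vector matches that of the $\nabla\psi(\boldsymbol{x})$ term, enabling their combination into a difference of gradients.
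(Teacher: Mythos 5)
Your proof is correct: the expansion of the three Bregman terms, the cancellation of the six $\psi(\cdot)$ values, and the combination $(\boldsymbol{z}-\boldsymbol{y})-(\boldsymbol{x}-\boldsymbol{y})=\boldsymbol{z}-\boldsymbol{x}$ all check out, yielding exactly $\langle \nabla\psi(\boldsymbol{y})-\nabla\psi(\boldsymbol{x}), \boldsymbol{z}-\boldsymbol{x}\rangle$. The paper states this three-point identity without proof, and the direct definitional expansion you give is the standard argument it implicitly relies on, so there is nothing to reconcile.
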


\begin{defi}
The \textbf{dual norm} $\| \cdot \|_{\star}$ of a norm $\| \cdot \|$ is defined as $\| \boldsymbol{\theta} \|_{\star}= \max_{\boldsymbol{x}:\| \boldsymbol{x} \| \le 1} \langle \boldsymbol{\theta}, \boldsymbol{x} \rangle$.
\end{defi}

Next, we present the one-step relationship.

\begin{lemma}
\label{lemma:3}
Let $B_{\psi}$ the Bregman divergence w.r.t. $\psi: X \rightarrow \mathbb{R}$, and assume $\psi$ to be proper, closed, and $\lambda$-strongly convex. Let $\boldsymbol{g_k} \in \partial u_k(\boldsymbol{w_k})$. $\forall \boldsymbol{y} \in V$, the following inequality holds
\begin{equation}
    \eta(u_k(\boldsymbol{y})-u_k(\boldsymbol{w_k})) \le B_{\psi}(\boldsymbol{w_k};\boldsymbol{y}) - B_{\psi}(\boldsymbol{w_{k+1}};\boldsymbol{y}) + \frac{\eta^2}{2\lambda} \| \boldsymbol{g_k} \|^2_{\star}.
    \nonumber
\end{equation}
\end{lemma}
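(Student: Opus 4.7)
The plan is to follow the classical Online Mirror Descent one-step analysis: extract a variational inequality from the update rule, convert the resulting inner product into a telescoping pair of Bregman divergences via Lemma 2, introduce the utility gap through concavity, and finally invoke $\lambda$-strong convexity of $\psi$ together with a Fenchel–Young/AM-GM step to trade off the remaining first-order slack against a quadratic penalty.

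First, I would write the OMD update (given earlier in Appendix C, with a sign flip since $u_k$ is a utility rather than a loss) as $\boldsymbol{w}_{k+1} = \arg\min_{\boldsymbol{w} \in V}\{-\eta\langle \boldsymbol{g}_k,\boldsymbol{w}\rangle + B_\psi(\boldsymbol{w};\boldsymbol{w}_k)\}$ and read off its first-order optimality condition: for every $\boldsymbol{y} \in V$,
\begin{equation*}
\langle -\eta\boldsymbol{g}_k + \nabla\psi(\boldsymbol{w}_{k+1}) - \nabla\psi(\boldsymbol{w}_k),\ \boldsymbol{y}-\boldsymbol{w}_{k+1}\rangle \ge 0.
\end{equation*}
Rearranged, this yields $\eta\langle \boldsymbol{g}_k,\boldsymbol{y}-\boldsymbol{w}_{k+1}\rangle \le \langle \nabla\psi(\boldsymbol{w}_{k+1})-\nabla\psi(\boldsymbol{w}_k),\boldsymbol{y}-\boldsymbol{w}_{k+1}\rangle$. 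Applying Lemma 2 with the identification $(\boldsymbol{x},\boldsymbol{y},\boldsymbol{z}) \mapsto (\boldsymbol{w}_{k+1},\boldsymbol{w}_k,\boldsymbol{y})$ rewrites the right-hand side as a pair of telescoping Bregman divergences minus the quadratic remainder $B_\psi(\boldsymbol{w}_{k+1};\boldsymbol{w}_k)$.

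Second, I would split $\langle \boldsymbol{g}_k,\boldsymbol{y}-\boldsymbol{w}_k\rangle = \langle \boldsymbol{g}_k,\boldsymbol{y}-\boldsymbol{w}_{k+1}\rangle + \langle \boldsymbol{g}_k,\boldsymbol{w}_{k+1}-\boldsymbol{w}_k\rangle$ and use concavity at $\boldsymbol{w}_k$ with supergradient $\boldsymbol{g}_k$ to get $u_k(\boldsymbol{y}) - u_k(\boldsymbol{w}_k) \le \langle \boldsymbol{g}_k,\boldsymbol{y}-\boldsymbol{w}_k\rangle$. The residual $\eta\langle \boldsymbol{g}_k,\boldsymbol{w}_{k+1}-\boldsymbol{w}_k\rangle - B_\psi(\boldsymbol{w}_{k+1};\boldsymbol{w}_k)$ is then controlled by pairing $B_\psi(\boldsymbol{w}_{k+1};\boldsymbol{w}_k) \ge \tfrac{\lambda}{2}\|\boldsymbol{w}_{k+1}-\boldsymbol{w}_k\|^2$ (strong convexity of $\psi$), Cauchy–Schwarz in the primal/dual norm pair $\langle \boldsymbol{g}_k,\boldsymbol{w}_{k+1}-\boldsymbol{w}_k\rangle \le \|\boldsymbol{g}_k\|_\star\|\boldsymbol{w}_{k+1}-\boldsymbol{w}_k\|$, and AM-GM with weights chosen so that the quadratic $\|\boldsymbol{w}_{k+1}-\boldsymbol{w}_k\|^2$ terms cancel exactly, leaving the advertised $\tfrac{\eta^2}{2\lambda}\|\boldsymbol{g}_k\|_\star^2$.

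The main obstacle I anticipate is bookkeeping around the Bregman arguments: Lemma 2 yields a quadratic term with a specific sign, and that term must be absorbed together with $\eta\langle \boldsymbol{g}_k,\boldsymbol{w}_{k+1}-\boldsymbol{w}_k\rangle$ for the Fenchel–Young step to produce the stated bound; choosing the weights in AM-GM as $\eta\|\boldsymbol{g}_k\|_\star \cdot \|\boldsymbol{w}_{k+1}-\boldsymbol{w}_k\| \le \tfrac{\eta^2}{2\lambda}\|\boldsymbol{g}_k\|_\star^2 + \tfrac{\lambda}{2}\|\boldsymbol{w}_{k+1}-\boldsymbol{w}_k\|^2$ is what makes the cancellation work. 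A secondary subtlety is verifying that the substitution into Lemma 2 is consistent with the argument ordering in Definition 1 so that the telescoping divergences land in the form $B_\psi(\,\cdot\,;\boldsymbol{y})$ evaluated at $\boldsymbol{w}_k$ and $\boldsymbol{w}_{k+1}$ as stated; I would double-check this substitution against the paper's convention before finalizing the proof.
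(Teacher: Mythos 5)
Your proposal is correct and follows essentially the same route as the paper's proof: the supergradient (concavity) inequality, the first-order optimality condition of the mirror-descent update to discard one inner-product term, Lemma \ref{lemma:2} to produce the telescoping Bregman pair minus $B_\psi(\boldsymbol{w_{k+1}};\boldsymbol{w_k})$, and then strong convexity plus the dual-norm Cauchy--Schwarz and the AM--GM step $ax-\tfrac{b}{2}x^2\le\tfrac{a^2}{2b}$ to obtain $\tfrac{\eta^2}{2\lambda}\|\boldsymbol{g_k}\|_\star^2$. The two subtleties you flag (the sign of the residual term and the ordering of the Bregman arguments) are exactly the points where the paper's own write-up is loosest, so your extra care there is warranted rather than a deviation.
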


\begin{proof}
First of all, from the strong convexity of $u_k$, we have
\begin{equation}
    \eta(u_k(\boldsymbol{y})-u_k(\boldsymbol{w_{k}})) \le \eta \langle \boldsymbol{g_k}, \boldsymbol{y}-\boldsymbol{w_k} \rangle.
    \nonumber
\end{equation}
We further expand the equality as
\begin{equation}
    \begin{aligned}
    & \eta \langle \boldsymbol{g_k}, \boldsymbol{y}-\boldsymbol{w_k} \rangle = \langle \nabla\psi(\boldsymbol{w_k}) -\nabla\psi(\boldsymbol{w_{k+1}}) -\eta\boldsymbol{g_k}, \boldsymbol{w_{k+1}} - \boldsymbol{y} \rangle \\
    & + \langle \nabla\psi(\boldsymbol{w_{k+1}}) -\nabla\psi(\boldsymbol{w_{k}}), \boldsymbol{w_{k+1}} - \boldsymbol{y} \rangle + \langle \eta\boldsymbol{g_k}, \boldsymbol{w_k} - \boldsymbol{w_{k+1}} \rangle \\
    &\le \langle \nabla\psi(\boldsymbol{w_{k+1}}) -\nabla\psi(\boldsymbol{w_{k}}), \boldsymbol{w_{k+1}} - \boldsymbol{y} \rangle + \langle \eta\boldsymbol{g_k}, \boldsymbol{w_k} - \boldsymbol{w_{k+1}} \rangle.
    \end{aligned}
    \nonumber
\end{equation}

    \begin{equation}
    \begin{aligned}
    \eta \langle \boldsymbol{g_k}, \boldsymbol{y}-\boldsymbol{w_k} \rangle &= \langle \nabla\psi(\boldsymbol{w_k}) -\nabla\psi(\boldsymbol{w_{k+1}}) -\eta\boldsymbol{g_k}, \boldsymbol{w_{k+1}} - \boldsymbol{y} \rangle \\
    & \quad + \langle \nabla\psi(\boldsymbol{w_{k+1}}) -\nabla\psi(\boldsymbol{w_{k}}), \boldsymbol{w_{k+1}} - \boldsymbol{y} \rangle \\
    & \quad + \langle \eta\boldsymbol{g_k}, \boldsymbol{w_k} - \boldsymbol{w_{k+1}} \rangle \\
    &\le \langle \nabla\psi(\boldsymbol{w_{k+1}}) -\nabla\psi(\boldsymbol{w_{k}}), \boldsymbol{w_{k+1}} - \boldsymbol{y} \rangle \\
    & \quad + \langle \eta\boldsymbol{g_k}, \boldsymbol{w_k} - \boldsymbol{w_{k+1}} \rangle.
    \end{aligned}
    \nonumber
\end{equation}

Using Lemma \ref{lemma:2}, we can interpret the first term as
\begin{equation}
    \begin{aligned}
    &\langle \nabla\psi(\boldsymbol{w_{k+1}}) -\nabla\psi(\boldsymbol{w_{k}}), \boldsymbol{w_{k+1}} - \boldsymbol{y} \rangle \\
    = \quad & B_{\psi}
    (\boldsymbol{w_k};\boldsymbol{y}) - B_{\psi}(\boldsymbol{w_{k+1}};\boldsymbol{y}) - B_{\psi}(\boldsymbol{w_{k+1}};\boldsymbol{w_k}).
    \end{aligned}
    \nonumber
\end{equation}
As $\psi(\boldsymbol{w})$ is $\lambda$-strongly convex, we have
\begin{equation}
\label{proof:1}
    B_{\psi}(\boldsymbol{w_{k+1}}; \boldsymbol{w_k}) \ge \frac{\lambda}{2} \| \boldsymbol{w_{k+1}} - \boldsymbol{w_k} \|^2.
    \notag
\end{equation}
According to the definition of the dual norm, we have
\begin{equation}
\label{proof:2}
    \langle \eta\boldsymbol{g_k}, \boldsymbol{w_k} - \boldsymbol{w_{k+1}} \rangle \le \| \eta\boldsymbol{g_k} \|_{\star} \| \boldsymbol{w_k} - \boldsymbol{w_{k+1}} \|.
    \notag
\end{equation}
Putting above two together, we have
\begin{equation}
    \begin{aligned}
    &\langle \nabla\psi(\boldsymbol{w_{k+1}}) -\nabla\psi(\boldsymbol{w_{k}}), \boldsymbol{w_{k+1}} - \boldsymbol{y} \rangle + \langle \eta\boldsymbol{g_k}, \boldsymbol{w_k} - \boldsymbol{w_{k+1}} \rangle \\
    \le& B_{\psi}
    (\boldsymbol{w_k};\boldsymbol{y}) - B_{\psi}(\boldsymbol{w_{k+1}};\boldsymbol{y}) - \frac{\lambda}{2} \| \boldsymbol{w_{k+1}} - \boldsymbol{w_k} \|^2 \\
    &+ \eta \| \boldsymbol{g_k} \|_{\star} \| \boldsymbol{w_k} - \boldsymbol{w_{k+1}} \| \\
    \le& B_{\psi}
    (\boldsymbol{w_k};\boldsymbol{y}) - B_{\psi}(\boldsymbol{w_{k+1}};\boldsymbol{y}) + \frac{\eta^2}{2\lambda} \| \boldsymbol{g_k} \|_{\star}^2,
    \end{aligned}
    \nonumber
\end{equation}
where in the last inequality, we use the fact that $-\frac{b}{2} x^2 + ax \le \frac{a^2}{2b}$ holds true for $x \in \mathbb{R}$ and $a, b > 0$.

Hence, we complete the proof that
\begin{equation}
    \begin{aligned}
    &\eta(u_k(\boldsymbol{y})-u_k(\boldsymbol{w_k})) \le  \eta \langle \boldsymbol{g_k}, \boldsymbol{y}-\boldsymbol{w_k} \rangle \\
    \le&  B_{\psi}
    (\boldsymbol{w_k};\boldsymbol{y}) - B_{\psi}(\boldsymbol{w_{k+1}};\boldsymbol{y}) - B_{\psi} (\boldsymbol{w_{k+1}};\boldsymbol{w_k}) \\
    &+ \langle \eta\boldsymbol{g_k}, \boldsymbol{w_k} - \boldsymbol{w_{k+1}} \rangle \\
    \le&  B_{\psi}(\boldsymbol{w_k};\boldsymbol{y}) - B_{\psi}(\boldsymbol{w_{k+1}};\boldsymbol{y}) + \frac{\eta^2}{2\lambda} \| \boldsymbol{g_k} \|^2_{\star}.
    \end{aligned}
    \nonumber
\end{equation}

Based on Lemma \ref{lemma:3}, we can prove the regret bound for OMD algorithm.
\begin{lemma}
\label{lemma:4}
Set $\boldsymbol{w_1} \in V$ such that $\psi$ is differentiable in $\boldsymbol{w_1}$. Assume $\eta_k$ is constant, i.e., $\eta_k=\eta, \forall k=1, 2, \cdots, K$. Then, under the assumptions of Lemma \ref{lemma:3} and $\forall \boldsymbol{y} \in V$, we can prove that
\begin{equation}
    \sum_{k=1}^K (u_k(\boldsymbol{y})-u_k(\boldsymbol{w_k})) \le \frac{B_{\psi}(\boldsymbol{w_1};\boldsymbol{y})}{\eta} + \frac{\eta}{2\lambda}\sum_{k=1}^K \| \boldsymbol{g_k} \|^2_{\star}.
    \nonumber
\end{equation}
\end{lemma}

\begin{proof}
Based on Lemma \ref{lemma:3}, we have
\begin{equation}
    \begin{aligned}
    & \sum_{k=1}^K (u_k(\boldsymbol{y}) - u_k(\boldsymbol{w_k})) \le \frac{1}{\eta} \sum_{k=1}^K (B_{\psi}(\boldsymbol{w_k};\boldsymbol{y}) - B_{\psi}(\boldsymbol{w_{k+1}};\boldsymbol{y}))\\
    &+ \frac{\eta}{2\lambda}\sum_{k=1}^K \| \boldsymbol{g_k} \|^2_{\star} 
    = \frac{1}{\eta}(B_{\psi}(\boldsymbol{w_1};\boldsymbol{y}) - B_{\psi}(\boldsymbol{w_{K+1}};\boldsymbol{y})) \\
    &+ \frac{\eta}{2\lambda}\sum_{k=1}^K \| \boldsymbol{g_k} \|^2_{\star}
    \le \frac{B_{\psi}(\boldsymbol{y};\boldsymbol{w_1})}{\eta} + \frac{\eta}{2\lambda}\sum_{k=1}^K \| \boldsymbol{g_k} \|^2_{\star}.
    \end{aligned}
    \nonumber
\end{equation}
\end{proof}

Then, we have to prove the strong convexity of $\psi(\boldsymbol{w}) = \sum_{i=1}^M w_i\ln(w_i)$ in our online learning algorithm, in order to use the conclusion of Lemma \ref{lemma:3}.

\begin{lemma}
\label{lemma:5}
$\psi(\boldsymbol{w})=\sum_{i=1}^M w_i\ln(w_i)$ is 1-strongly convex with respect to the $L_1$ norm over the set $J= \{ \boldsymbol{w} \in \mathbb{R}^M : w_i>0, \| \boldsymbol{w} \|_1=1 \}$.
\end{lemma}

\begin{proof}
We prove the strong convexity of $\psi(\boldsymbol{w})$ directly through the definition of the strong convex function, and we have
\begin{equation}
    \begin{aligned}
    \langle \nabla\psi(\boldsymbol{x}) - \nabla\psi(\boldsymbol{y}), \boldsymbol{x} - \boldsymbol{y} \rangle &= \sum_{i=1}^M (x_i - y_i)\ln(\frac{x_i}{y_i}) \\
    &= \sum_{i=1}^M (x_i\ln(\frac{x_i}{y_i}) + y_i\ln(\frac{y_i}{x_i})) \\
    &\ge \frac{1}{2} \sum_{i=1}^M |x_i-y_i|^2 + \frac{1}{2} \sum_{i=1}^M |y_i-x_i|^2 \\
    &\ge \| \boldsymbol{x} - \boldsymbol{y} \|^2_1,
    \end{aligned}
    \nonumber
\end{equation}
where we use Pinsker's inequality to complete the proof.
\end{proof}

Under the assumption of Lemma \ref{lemma:1}, when we set $\boldsymbol{w_1} = [\frac{1}{M}, \frac{1}{M}, \cdots, \frac{1}{M}]$, we have 
\begin{equation}
    B_{\psi}(\boldsymbol{w_1}; \boldsymbol{y}) = \sum_{i=1}^M y_i\ln(y_i) + \ln(M) \le \ln(M).
    \nonumber
\end{equation}
We also have $\| \boldsymbol{g_k} \|_{\star} \le 1$ for the utility functions $u_k$ in our resource allocation problem and we set $\eta = \sqrt{\frac{2\ln(M)}{K}}$. Then, from Lemma \ref{lemma:4} and Lemma \ref{lemma:5}, we have
\begin{equation}
    \begin{aligned}
        \sum_{k=1}^K (u_k(\boldsymbol{y})-u_k(\boldsymbol{w_k})) \le& \frac{B_{\psi}(\boldsymbol{w_1};\boldsymbol{y})}{\eta} + \frac{\eta}{2\lambda}\sum_{k=1}^K \| \boldsymbol{g_k} \|^2_{\star} \\
        \le& \frac{\ln(M)}{\eta} + \frac{\eta \cdot K}{2} \\
        =& \sqrt{\frac{K \cdot \ln(M)}{2}} + \frac{\sqrt{2K \cdot \ln(M)}}{2} \\
        =& \sqrt{2K\ln(M)},
    \end{aligned}
    \nonumber
\end{equation}
which completes the theoretical proof of Theorem \ref{theorem:2} for our online learning algorithm.
\end{proof}

\begin{proof}
Let $L_k := \sum_{i=1}^M w_k^i u_k^i$ denote the expected utility under the current distribution, and let $U_k(\boldsymbol{y}) := \sum_{i=1}^M y^i u_k^i$ be the utility under the fixed distribution $\boldsymbol{y}$. 
Define the cumulative regret: 
\[
\text{Regret}_K = \sum_{k=1}^K \left(U_k(\boldsymbol{y}) - L_k\right).
\]

We define a potential function based on the Kullback-Leibler (KL) divergence between a fixed target distribution $\boldsymbol{y} \in \Delta_M$ and the algorithm's current distribution $\boldsymbol{w}_k$:
\begin{equation}
    \Phi_k := D_{\text{KL}}(\boldsymbol{y} \parallel \boldsymbol{w}_k) = \sum_{i=1}^M y^i \ln \frac{y^i}{w_k^i}.
    \nonumber
\end{equation}

The multiplicative weights update is given by:
\begin{equation}
    w_{k+1}^i = \frac{w_k^i \cdot \exp(\eta u_k^i)}{Z_k}, \quad \text{where } Z_k = \sum_{j=1}^M w_k^j \cdot \exp(\eta u_k^j).
    \nonumber
\end{equation}

We analyze the change in potential $\Phi_{k+1} - \Phi_k$:
\begin{align}
    \nonumber
    \Phi_{k+1} - \Phi_k &= \sum_{i=1}^M y^i \ln \frac{w_k^i}{w_{k+1}^i} \\ \nonumber
    &= \sum_{i=1}^M y^i \left( \ln Z_k - \eta u_k^i \right) \\ \nonumber
    &= \ln Z_k - \eta \sum_{i=1}^M y^i u_k^i \\
    \nonumber
    &= \ln Z_k - \eta U_k(\boldsymbol{y}).
    \nonumber
\end{align}

Using the Taylor expansion inequality for the exponential function 
$
\exp(x) \leq 1 + x + x^2, \quad \text{for } x \in [0,1],
$ 
and assuming the learning rate $\eta$ is sufficiently small, we have that
\[
Z_k \leq \sum_{i=1}^M w_k^i \left(1 + \eta u_k^i + \eta^2 \right) = 1 + \eta \sum_{i=1}^M w_k^i u_k^i + \eta^2,
\]
where we used the fact that 
$
\sum_{i=1}^M w_k^i = 1.
$

Next, applying the inequality for the logarithm function 
$
\ln(1 + x) \leq x, \quad \text{for } x > -1,
$ 
we obtain
\[
\ln Z_k \leq \eta \sum_{i=1}^M w_k^i u_k^i + \eta^2= \eta L_k + \eta^2..
\]

Combining the above inequalities, we obtain the per-round change in KL divergence:
\begin{equation}
    \Phi_{k+1} - \Phi_k \le \eta \left( L_k - U_k(\boldsymbol{y}) \right) + \eta^2,
    \nonumber
\end{equation}
which, after rearranging, yields:
\begin{equation}
    U_k(\boldsymbol{y}) - L_k \le \frac{\Phi_k - \Phi_{k+1}}{\eta} + \eta.
    \nonumber
\end{equation}

This inequality bounds the instantaneous regret (i.e., the performance gap between the fixed comparator distribution $\boldsymbol{y}$ and the current decision $\boldsymbol{w}_k$) using the change in KL divergence and a small additive term $\eta$.

Summing both sides over all rounds $k = 1$ to $K$, we get the total regret:
\begin{align}
    \sum_{k=1}^K \left( U_k(\boldsymbol{y}) - L_k \right)
    &\le \frac{1}{\eta} \sum_{k=1}^K \left( \Phi_k - \Phi_{k+1} \right) + K \eta \nonumber \\
    &= \frac{1}{\eta} (\Phi_1 - \Phi_{K+1}) + K \eta \nonumber \\
    &\le \frac{\Phi_1}{\eta} + K \eta,
    \nonumber
\end{align}
since $\Phi_{K+1} \ge 0$ by non-negativity of KL divergence.

Now we bound the initial divergence $\Phi_1$. Since the initial weight vector is uniform, i.e., $w_1^i = \frac{1}{M}$ for all $i$, and since $\boldsymbol{y}$ lies in the probability simplex $\Delta_M$, we have:
\begin{align}
    \Phi_1 
    &= D_{\mathrm{KL}}(\boldsymbol{y} \| \boldsymbol{w}_1)
    = \sum_{i=1}^M y^i \ln \left( \frac{y^i}{1/M} \right)
    = \sum_{i=1}^M y^i \ln(M y^i)
    \le \ln M,
    \nonumber
\end{align}
where the last inequality follows from Jensen’s inequality or the fact that the KL divergence is maximized when $\boldsymbol{y}$ is a unit vector.

Therefore, we obtain the following regret bound:
\begin{equation}
    \text{Regret}_K \le \frac{\ln M}{\eta} + K \eta.
    \nonumber
\end{equation}

Optimizing the bound by choosing the learning rate $\eta = \sqrt{\frac{2 \ln M}{K}}$ yields:
\begin{equation}
    \text{Regret}_K \le \sqrt{2K \ln M},
    \nonumber
\end{equation}
which is sublinear in $K$, implying that the average regret $\frac{\text{Regret}_K}{K} \to 0$ as $K \to \infty$.
\end{proof}

\end{document}